\newtheorem{thm}{Theorem}
\newtheorem{defn}{Definition}
\newtheorem{lemma}{Lemma}
\newtheorem{pro}{Proposition}
\newtheorem{rk}{Remark}
\numberwithin{equation}{section} \setcounter{tocdepth}{1}
\newcommand{\bea}{\begin{eqnarray}}
\newcommand{\eea}{\end{eqnarray}}
\newcommand{\R}{\mathbb{R}}
\def \t {\theta}
\begin{document}
\title[On four state HC models]{On four state Hard Core Models on the Cayley Tree}

\author{D. Gandolfo, U. A. Rozikov, J. Ruiz}

 \address{D.\ Gandolfo and J.Ruiz\\Centre de Physique Th\'eorique, UMR 7332, 
 Aix Marseille Univ, Universit\'e de Toulon, CNRS, CPT, Marseille, France.}
\email {gandolfo@cpt.univ-mrs.fr\ \ ruiz@cpt.univ-mrs.fr}

 \address{U.\ A.\ Rozikov\\ Institute of mathematics,
29, Do'rmon Yo'li str., 100125, Tashkent, Uzbekistan.}
\email {rozikovu@yandex.ru}

\begin{abstract}
We consider a nearest-neighbor four state hard-core (HC) model on the homogeneous Cayley tree of order $k$.
The Hamiltonian of the model is considered on a set of ``admissible'' configurations.
Admissibility is specified through a graph with four vertices.
We first exhibit conditions (on the graph and on the parameters) under which the model has a unique Gibbs measure.
Next we turn to some specific cases.
Namely, first we study, in the case of a particular  graph (the diamond), translation-invariant and periodic Gibbs measures.
We provide in both cases the equations of the transition lines separating uniqueness  from non--uniqueness regimes.
Finally the  same is done for ``fertile'' graphs, the so--called stick, gun, and key (here only translation invariant states
are taken into account).
\end{abstract}
\maketitle

{\bf Mathematics Subject Classifications (2010).} 82B26 (primary);
60K35 (secondary)

{\bf{Key words.}} Cayley tree, hard core interaction, Gibbs measures, splitting measures.

\section{Introduction and  definitions} \label{sec:intro}

Hard core constraints arise in fields as diverse as combinatorics,
statistical mechanics and telecommunications.
In particular, hard core
models arise in the study of random independent sets of graphs \cite{Br3}, \cite{Gal}, the
study of gas molecules on a lattice \cite{Ba},  in the analysis of multi-casting
in telecommunication networks (see e.g. \cite{Kel1}, \cite{Lo}, \cite{Mi}).

We refer the reader to the nice article by Brightwell and Winkler  \cite{Br1}
on the subject, and to \cite{Br3} focusing on hard core models on the Bethe lattice (Cayley tree).

 Let $\Gamma^k= (V , L)$ be the uniform Cayley tree,
 where each vertex has $k + 1$ neighbors with $V$ being the set of vertices and $L$ the set of edges (bonds).

On the Cayley tree, there is a natural distance to be denoted $d(x,y)$,
 being
 the number of nearest neighbors pairs  of the minimal path between  the vertices $x$ and $y$
 (by  path one means   a collection of  nearest neighbors pairs, two consecutive pairs
 sharing at least a given vertex).

For a fixed $x^0\in V$, the root,
we let
$$ V_n=\{x\in V\ \ | \ \  d(x,x^0)\leq n\}$$
be the ball of radius $n$
and
$$ W_n=\{x\in V\ \ |\ \  d(x^0,x)=n\}$$
be the sphere of radius $n$ with center at $x^0$.

We will write $x<y$ if the path from $x^0$ to $y$ goes through $x$.
This is a partial order on the tree, for example, two different points of $W_n$ can not be ordered by this way.
But for any pair of
nearest neighbors $x$ and $y$ one has $x<y$ or $x>y$.

Let $S(x)$ be the direct successors of $x$, i.e., for $x\in W_n$
$$S(x)=\{y\in W_{n+1}: d(x,y)=1\}.$$

We denote by $\Phi=\{0,1,2,3\}$ the values
of the spins $\sigma(x)$ sitting on vertices
may assume.
A configuration on the Cayley tree is a collection $\sigma = \{\sigma(x),$ $x\in V\} \in \Phi^V$.

Consider a given subset ${\mathcal G}$
of pairs
$(i,j) \in \Phi \times \Phi$. Using ${\mathcal G}$ one can define a (partial) directed graph
on $\Phi$: if $(i,j)$ and $(j,i)$ are in ${\mathcal G}$ then the edge $(i,j)$ left undirected
(or equivalently have directions to both endpoints).
If $(i,j)\in {\mathcal G}$ but $(j,i)\notin {\mathcal G}$ then the edge $(i,j)$ is directed from $i$ to $j$.
The outdegree (respectly indegree) of a vertex is the number of ingoing (respectly outgoing) edges.
In this paper we assume that each vertex of the graph has positive indegree and outdegree.
The indegree will be denoted deg$^-(v)$ and the outdegree by deg$^+(v)$.

Note that ${\mathcal G}$ may be viewed as a directed graph  and that the correspondence is one-to-one (see Fig.~1).

\begin{center}
\includegraphics[width=8cm]{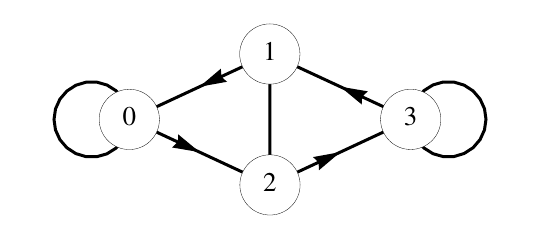}

{\footnotesize \noindent Fig.~1.
The directed graph associated with
$\mathcal G=\{(0,0),(0,2),(1,0),(1,2),(2,1),(2,3),(3,1),(3,3)\}$
(one puts an arrow when only one direction is selected).}
\end{center}

\begin{rk}
In this paper we consider the Cayley tree as a directed tree: an edge $\langle x,y\rangle\in L$ with endpoints $x$, $y$ has direction
from $x$ to $y$ iff $x<y$.  Thus the root $x^0$ has outdegree  $k+1$, each other vertex of the tree has indegree 1 (coming from the root)
and it has outdegree $k$.
\end{rk}

A configuration will be called admissible with respect to ${\mathcal G}$  if
 $(\sigma (x),\sigma (y)) \in {\mathcal G}$,
 for any pair of nearest neighbors $x$ and $y$ with $x<y$ (i.e., directed from $x$ to $y$).

 For a given set  ${\mathcal G}$ we denote by $\Omega$,
 the set of admissible configurations, and by $\Omega_A$ its restriction to a subset $A$ of $V$.

The  Hamiltonian
of the model is defined through a  matrix
$${\mathbf P}
=
\left(
\begin{array}{cccc}
P_{0,0}&P_{0,1}&P_{0,2}&P_{0,3}\\[2mm]
P_{1,0}&P_{1,1}&P_{1,2}&P_{1,3}\\[2mm]
P_{2,0}&P_{2,1}&P_{2,2}&P_{2,3}\\[2mm]
P_{3,0}&P_{3,1}&P_{3,2}&P_{3,3}\\[2mm]
\end{array}
\right), $$
where $P_{i,j}> 0$, if $(i,j)\in \mathcal G$; $P_{i,j}=0$ if $(i,j)\not\in \mathcal G$ and $\sum_{j\in \Phi}P_{i,j}=1$.

Namely, given ${\mathcal G}$ and ${\mathbf P}$,
we define the HC  Hamiltonian by
\begin{equation}\label{H}
H(\sigma)
=\left\{\begin{array}{ll}
\sum_{\langle x, y\rangle}\log P_{\sigma(x),\sigma(y)},\ \ \mbox{if}\ \ \sigma\in \Omega,\\[4mm]
+\infty, \ \ \mbox{if}\ \ \sigma\not\in \Omega.\\
\end{array}\right.
\end{equation}

The paper is organized as follows.
Gibbs measures of the model with the corresponding system of  recursive equations are presented in Section 2.
For the theory of Gibbs measures on Cayley trees see \cite{R}.
In Section 3 we provide conditions under which the model has unique Gibbs measure.
Section 4 is devoted to  the diamond graph.
The results for fertile graphs are given in Section 5.

\section{Gibbs measures and recursive equations}

Let $t:\;x\in V\mapsto t_x=(t_{i,x}, \, i\in \Phi)\in{\R}^4_+$ be a vector-valued function on $V$.
Given $n=1,2,\ldots$,
consider the probability distribution $\mu^{(n)}$ on
$\Omega_{V_n}$ defined by
\begin{equation}\label{e5}
\mu^{(n)}(\sigma_n)=\frac{1}{Z_n}
\prod_{\langle x,y\rangle \subset V_n}
P_{\sigma(x),\sigma(y)}
\prod_{x\in W_{n}}t_{\sigma(x),x}.
\end{equation}
where the first product runs over pairs of nearest neighbors of $V_n$ and
$Z_n$ is the corresponding partition function.

We say that the probability distributions $\mu^{(n)}$
are compatible if $\forall$ $n\geq 1$ and
$\sigma_{n-1}\in\Omega_{V_{n-1}}$:
\begin{equation}\label{e7}
\sum_{\omega_n\in\Omega_{W_n}}
\mu^{(n)}(\sigma_{n-1},\omega_n)
=
\mu^{(n-1)}(\sigma_{n-1}).
\end{equation}

Such measures are usually called splitting Gibbs measures (see e.g. \cite{G,P,S}).

\begin{pro}\label{p4} The probability distributions
$\mu^{(n)}$, $n=1,2,\ldots$, in
(\ref{e5}) are compatible iff for any $x\in V$ the following
system of equations holds:
\begin{equation}\label{et}
z_{i,x}= \prod_{y\in S(x)}{P_{i,0}+P_{i,1}z_{1,y}+P_{i,2}z_{2,y}+P_{i,3}z_{3,y}\over
P_{0,0}+P_{0,1}z_{1,y}+P_{0,2}z_{2,y}+P_{0,3}z_{3,y}}.
\end{equation}
where $S(x)$ are the direct successors of $x$ (the $k$ nearest neighbors s.t. $x<y$) and   $z_{i,x}= t_{i,x}/t_{0,x}$, $i=1,2,3$.
\end{pro}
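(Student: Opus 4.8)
The plan is to establish the equivalence in the standard way for splitting Gibbs measures on trees, using the Kolmogorov-type consistency condition \eqref{e7} and peeling off the outermost sphere $W_n$.

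First I would prove the "if" direction. Assume the system \eqref{et} holds at every vertex (with $z_{i,x}=t_{i,x}/t_{0,x}$ determined by a choice of boundary vectors $t_x$). Fix $n$ and $\sigma_{n-1}\in\Omega_{V_{n-1}}$. In the left-hand side of \eqref{e7}, the sum over $\omega_n\in\Omega_{W_n}$ factorizes: writing $V_n=V_{n-1}\cup W_n$ and noting that every bond of $V_n$ not in $V_{n-1}$ joins some $x\in W_{n-1}$ to one of its direct successors $y\in S(x)$, the product $\prod_{\langle x,y\rangle\subset V_n}P_{\sigma(x),\sigma(y)}$ splits as the $V_{n-1}$-part times $\prod_{x\in W_{n-1}}\prod_{y\in S(x)}P_{\sigma(x),\sigma(y)}$. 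Since the constraint $(\sigma(x),\omega_n(y))\in\G$ couples each $y\in S(x)$ only to its parent $x$, and since the weights $\prod_{y\in W_n}t_{\sigma(y),y}$ are a product over the $y$'s, the whole sum becomes
\begin{equation*}
\frac{Z_{n-1}}{Z_n}\,\mu^{(n-1)}(\sigma_{n-1})\prod_{x\in W_{n-1}}\ \prod_{y\in S(x)}\ \sum_{u:\,(\sigma(x),u)\in\G}P_{\sigma(x),u}\,t_{u,y}.
\end{equation*}
Dividing numerator and denominator by $t_{0,y}$ turns the inner sum into $t_{0,y}\bigl(P_{\sigma(x),0}+\sum_{j=1}^3 P_{\sigma(x),j}z_{j,y}\bigr)$. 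Comparing the resulting expression with $\mu^{(n-1)}(\sigma_{n-1})$, the terms indexed by $\sigma(x)$ for $x\in W_{n-1}$ must cancel; the ratio of the $\sigma(x)=i$ contribution to the $\sigma(x)=0$ contribution is exactly $z_{i,x}/\prod_{y\in S(x)}\bigl(\text{denominator}\bigr)$, which equals $1$ precisely by \eqref{et}. The leftover scalar $\prod_{y\in W_n}t_{0,y}\cdot(\text{denominators})$ times $Z_{n-1}/Z_n$ is independent of $\sigma_{n-1}$, so summing \eqref{e7} over $\sigma_{n-1}$ forces it to equal $1$, giving compatibility.

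For the "only if" direction I would run the same computation in reverse: compatibility \eqref{e7} says that for every admissible $\sigma_{n-1}$ the factorized sum above equals $\mu^{(n-1)}(\sigma_{n-1})$ up to the $\sigma_{n-1}$-independent normalization; taking the ratio of this identity for a configuration with $\sigma(x)=i$ at a single $x\in W_{n-1}$ versus $\sigma(x)=0$ (keeping everything else fixed, which is possible since each vertex has positive indegree and outdegree so admissible extensions exist) isolates equation \eqref{et} at that $x$. Ranging over all $x$ and all $n$ gives the full system, since every vertex of $V$ lies in some $W_{n-1}$.

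The main obstacle is bookkeeping: making the factorization of the sum over $\Omega_{W_n}$ fully rigorous requires care that admissibility of $(\sigma_{n-1},\omega_n)$ decouples across the direct successors and that nonempty admissible extensions exist (guaranteed by the positive indegree/outdegree assumption on $\G$), and one must track the $\sigma$-independent constants through to conclude they are forced to be $1$. I would present the $k$-fold product over $S(x)$ explicitly so that \eqref{et} appears verbatim, and relegate the constant-tracking to the remark that summing over $\sigma_{n-1}$ pins down normalization.
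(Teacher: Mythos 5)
Your ``if'' direction is the standard peeling computation and is essentially what the paper delegates to the cited reference: sum over $\omega_n$, use that each $y\in W_n$ is coupled only to its parent so the sum factorizes, and reduce compatibility to the existence of constants $c_x$ with $\prod_{y\in S(x)}\bigl(\sum_{u}P_{i,u}t_{u,y}\bigr)=c_x\,t_{i,x}$ for all $i$, whose $i$-to-$0$ ratio is exactly \eqref{et}; the remaining $\sigma_{n-1}$-independent constant is pinned to $1$ by normalization. That half is correct.

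The gap is in the ``only if'' direction. You isolate \eqref{et} at a vertex $x\in W_{n-1}$ by comparing two admissible configurations that agree everywhere except that one has $\sigma(x)=i$ and the other $\sigma(x)=0$, justifying the existence of such a pair by the positive indegree/outdegree hypothesis. That hypothesis is not what you need: you need a single spin value $j$ at the parent of $x$ with \emph{both} $(j,0)\in\G$ and $(j,i)\in\G$. For the paper's own diamond graph this fails for $i=1$: $(j,0)\in\G$ forces $j\in\{0,1\}$ while $(j,1)\in\G$ forces $j\in\{2,3\}$, so no admissible configuration can be modified at the single site $x$ from spin $0$ to spin $1$. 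The same obstruction occurs for the stick graph, which is bipartite between $\{0,2\}$ and $\{1,3\}$. Hence the single-site ratio, as written, does not yield \eqref{et} for all $i=1,2,3$ in precisely the cases the paper studies. To close this one must either chain comparisons through intermediate spin values while also varying the parent's spin (i.e., modify the configuration on more than one site at a time), or extract from compatibility the full family of identities $\prod_{y\in S(x)}\bigl(\sum_u P_{i,u}t_{u,y}\bigr)=c_x t_{i,x}$ for all $i\in\Phi$ simultaneously before taking ratios. This is a fixable but genuine defect in the necessity half of your argument.
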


\begin{proof}  It consists to check condition (\ref{e7}) for the measures (\ref{e5}), see e.g.\ the proof of Theorem~1 in \cite{Ro15}.
\end{proof}

\section{Conditions of uniqueness }

\subsection{Condition on the graph ${\mathcal G}$}

As mentioned in the introduction,
every vertex of the graph is assumed to  have   positive indegree and outdegree (to avoid trivial situations).

\begin{rk}
Notice that if deg$^+(v)=1$ for all $v=0,1,2,3$, the corresponding hard core model has unique Gibbs measure.
\end{rk}

Indeed observe that in such a situation the  matrix ${\mathbf P}$ consists only 0 and 1, and  the Hamiltonian  (\ref{H})
reads
\begin{equation}\label{H1}
H(\sigma)=\left\{\begin{array}{ll}
0, \ \ \  \ \ \mbox{if}\ \ \sigma\in \Omega,\\[4mm]
+\infty, \ \ \mbox{if}\ \ \sigma\not\in \Omega.\\
\end{array}\right.
\end{equation}
In addition,  the set $\Omega$ of admissible configurations is finite,
so that
there exists a  unique Gibbs measure $\mu(\sigma)={1\over |\Omega|}$, $\sigma\in \Omega$.

\subsection{Condition on the matrix ${\mathbf P}$}

Denote $h_{i,x}=\ln z_{i,x}, i=1,2,3.$ Then the
equation (\ref{et}) can be written as

\begin{equation}\label{*1}
 h_{i,x}=\sum_{y\in S(x)}\ln{P_{i,0}+P_{i,1}\exp(h_{1,y})
 +P_{i,2}\exp(h_{2,y})+P_{i,3}\exp(h_{3,y})\over
P_{0,0}+P_{0,1}\exp(h_{1,y})+P_{0,2}\exp(h_{2,y})
+P_{0,3}\exp(h_{3,y})}.
\end{equation}

Note that $h_{i,x}\equiv 0,\, i=1,2,3,\, x\in V$ is a solution of (\ref{*1}).
Let us give  a condition on  $\mathbf P$ for
which it  will be the unique one.

We assume
\begin{equation}\label{*a}
P_{0,1}P_{0,2}P_{0,3}>0.
\end{equation}

\begin{lemma}
 If condition (\ref{*a}) is satisfied and
 $h_x=(\ln z_{1,x},\ln z_{2,x},\ln z_{3,x})$ is a solution
of (\ref{*1}) then
 $$z^-_i\leq z_{i,x} \leq z^+_i, $$
 for any
$i=1, 2, 3, \ \ x\in V$.
Here $(z^-_{1}, z^+_{1},  z^-_2, z^+_2, z^-_3, z^+_3)$ is a
solution of
\begin{equation}\label{*b}\begin{array}{llll}
\displaystyle z^-_i= \min_{(x,y,z)\in D}
f_i^k(x,y,z),
\\[2mm]
\displaystyle z^+_i=\max_{(x,y,z)\in D}f_i^k(x,y,z),\\
\end{array}
\end{equation}
where $D=[z_{1}^-,z_{1}^+]\times[z_2^-,z_2^+]\times [z_3^-,z_3^+]$
and
 $$f_i(x,y,z)={P_{i,0}+P_{i,1}x+P_{i,2}y+P_{i,3}z\over
P_{0,0}+P_{0,1}x+P_{0,2}y+P_{0,3}z}.$$
\end{lemma}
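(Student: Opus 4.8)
The plan is to set up a fixed-point iteration on the box $D$ and show that it converges to a single point, which forces any solution $h_x$ of (\ref{*1}) to have coordinates trapped in $[z_i^-,z_i^+]$. First I would observe that each $f_i$ is a ratio of affine functions of $(x,y,z)$, hence monotone in each variable separately (the sign of the partial derivative with respect to $x$ is the sign of $P_{i,1}P_{0,0}-P_{i,0}P_{0,1}$, and similarly for $y$ and $z$), so on a box $D$ the extrema in (\ref{*b}) are attained at vertices of $D$ and the system (\ref{*b}) is well-posed. The key structural point is that a solution of (\ref{*1}), written multiplicatively as in (\ref{et}), has the form $z_{i,x}=\prod_{y\in S(x)} f_i(z_{1,y},z_{2,y},z_{3,y})$; if we can show a priori that every $z_{i,x}$ lies in some initial box $D_0$ then monotonicity propagates the bounds.

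\medskip
\noindent The initial a priori bound comes from condition (\ref{*a}): since $P_{0,1},P_{0,2},P_{0,3}>0$, the denominator of $f_i$ is bounded below by a positive constant on all of $\R_+^3$ once any one coordinate is bounded, and in fact $f_i(x,y,z)\le \max\{P_{i,0}/P_{0,0},\,P_{i,1}/P_{0,1},\,P_{i,2}/P_{0,2},\,P_{i,3}/P_{0,3}\}=:M_i$ and $f_i(x,y,z)\ge \min\{\dots\}=:m_i>0$, simply because a ratio $(\sum a_jt_j)/(\sum b_jt_j)$ with $b_j>0$ lies between $\min_j a_j/b_j$ and $\max_j a_j/b_j$ (interpreting $a_0/b_0$ appropriately when $P_{0,0}=0$, in which case that term is dropped from the bound since $P_{i,0}\ge 0$ contributes nonnegatively — one checks the ratio is still bounded using the strictly positive terms). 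Hence $z_{i,x}=\prod_{y\in S(x)}f_i(\dots)\in[m_i^k,M_i^k]$ for every $x$ and every solution. Take $D_0=\prod_i[m_i^k,M_i^k]$ as the starting box.

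\medskip
\noindent Next I would iterate: define $T:D\mapsto T(D)=\prod_i[\min_D f_i^k,\max_D f_i^k]$. By the uniform bound just established, $T(D_0)\subseteq D_0$, and $T$ is inclusion-monotone, so the boxes $D_0\supseteq T(D_0)\supseteq T^2(D_0)\supseteq\cdots$ decrease to a limit box $D_\infty=\prod_i[z_i^-,z_i^+]$ whose endpoints satisfy exactly (\ref{*b}) — this gives existence of the solution of (\ref{*b}). Since every solution $h_x$ of (\ref{*1}) has $z_{\cdot,x}\in D_0$, and applying the recursion once maps a configuration of values in $D$ to values in $T(D)$, induction on the distance from a fixed level (or a standard compactness/limiting argument along $W_n$) shows $z_{i,x}\in D_m$ for all $m$, hence $z_{i,x}\in D_\infty$, which is the claim $z_i^-\le z_{i,x}\le z_i^+$.

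\medskip
\noindent The main obstacle I anticipate is making the last inductive step fully rigorous: on the infinite tree there is no ``bottom level'' from which to start the induction, so one has to argue that since the values $z_{\cdot,y}$ for $y$ deep in the tree all lie in $D_0$, and each application of the recursion (which has $k$ factors, one per successor) maps $D_m$ into $D_{m+1}$, the values at any fixed vertex $x$ lie in $\bigcap_m D_m=D_\infty$; this requires a short argument that the recursion ``at depth $n$'' really does compose $n$ times before reaching $x$, plus the observation that $D_\infty$ is closed. A secondary nuisance is the bookkeeping when $P_{0,0}=0$ (the diamond graph has $(0,0)\in\mathcal G$, so $P_{0,0}>0$ there, but the lemma is stated generally) and verifying that the vertex-extremum reduction in (\ref{*b}) is legitimate — both are routine given the affine-ratio monotonicity but should be stated carefully.
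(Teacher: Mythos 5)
Your proposal is correct and follows essentially the same route as the paper: establish an a priori box $D_0$ from the boundedness of the $f_i$ guaranteed by (\ref{*a}), then iterate the min/max construction to obtain nested boxes $D_n$ with monotone bounded endpoint sequences converging to $z_i^{\pm}$, trapping every solution in the limit box. The paper's proof is the same argument stated more tersely (it leaves the explicit bounds and the depth-induction step as ``not difficult to see''), so your additional care on those points is a refinement rather than a departure.
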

\begin{proof}  We rewrite (\ref{*1}) as
$$z_{i,x}= \prod^k_{j=1}f_i(z_{1,x_j}, z_{2,x_j}, z_{3,x_j}),$$
where
$x_j $
are  the direct successors of $x$.
The condition (\ref{*a}) guarantees that
the functions $f_i$ are bounded.
It is not difficult to see that
$$z^-_{i,1}<z_{i,x}<z^+_{i,1}, \ \ i=1, 2,3,$$
where
\begin{equation}\label{*2}\begin{array}{llll}
\displaystyle z^+_{i,1}=
\max_{{x, y, z>0}}f_i^k(x,y,z),\\[3mm]
\displaystyle z^-_{i,1}=\min_{{x, y, z>0}}f_i^k(x,y,z).
\end{array}
\end{equation}

Consider now the functions
$f_i(x,y,z)$ on  the sets $D_1=[z^-_{1,1}, z^+_{1,1}]\times[z^-_{2,1}, z^+_{2,1}]\times
[z^-_{3,1}, z^+_{3,1}]$.
A second step of the procedure leads to
$$z^-_{i,1}<z^-_{i,2}<z_{i,x}<z^+_{i,}<z^+_{i,1}, $$
and by iteration
we get the following inequalities
$$z^-_{i,n}<z_{i,x}<z_{i,n}^+,  $$
where  the $z^{\pm}_{i,n},
 n=1,2,...$,
 satisfy

\begin{equation}\label{e18}\begin{array}{llll}
\displaystyle z^-_{i,n+1}=\min_{(x,y,z)\in D_n}f^k_i(x,y,z),\\[3mm]
\displaystyle z^+_{i,n+1}=\max_{(x,y,z)\in D_n}f^k_i(x,y,z),\\
\end{array}
\end{equation}
with $z^{\pm}_{i,1}$ defined in (\ref{*2})
and
$$
D_n=[z^-_{1,n}, z^+_{1,n}]\times[z^-_{2,n}, z^+_{2,n}]\times
[z^-_{3,n}, z^+_{3,n}].$$

It is easy to see that  this  construction
leads to  bounded increasing (resp. decreasing)
sequences
$z^-_{i,n}$,   (resp.
$z^+_{i,n}$).
As a consequence,  we get the  existence of
$$\lim_{n\to\infty}z^{\pm}_{i,n}=z^{\pm}_i .
$$
This completes the proof.
\end{proof}

Consider the  function:
 $
h=(h_{1}, h_2, h_3)\to
F(h)=
(F_{1}(h),F_2(h),F_3(h))$,
 defined by
$$ F_i(h)
=\ln{P_{i,0}+P_{i,1}\exp(h_{1})+P_{i,2}\exp(h_{2})
+P_{i,3}\exp(h_{3})\over
P_{0,0}+P_{0,1}\exp(h_{1})+P_{0,2}\exp(h_{2})
+P_{0,3}\exp(h_{3})}.$$

We denote $\|h\|=\max \{|h_{1}|, |h_2|, |h_3|\}$ and  put
\begin{equation}
\theta_{ij}=\max_{(x,y,z)\in \mathcal D}\left|{\partial F_i(x,y,z)\over \partial h_j}\right|
\label{thetaij}
\end{equation}
where
$\mathcal D= [\ln z_{1}^-,\ln z_{1}^+]\times[\ln z_2^-,\ln z_2^+]\times [\ln z_3^-,\ln z_3^+].$

Condition (\ref{*a}) implies  that $\theta_{ij}<1$ for any $i,j=1,2,3$.
Let us denote
$$\theta=\max_{i,j}\theta_{ij}.$$
\begin{lemma}\label{l8} For any $h,l\in \mathcal D$ one has
\begin{itemize}
\item[a)] $\|F(h)-F(l)\|\leq 3\t\|h-l\|,$

\item[b)] $\|F(h)\|\leq 3\t \|h\|.$
\end{itemize}
\end{lemma}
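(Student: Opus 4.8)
The plan is to prove both statements by direct estimates on the function $F$. For part (a), I would start from the elementary Lagrange mean value bound: writing $F_i(h)-F_i(l)$ as a path integral of $\nabla F_i$ along the segment from $l$ to $h$ (which stays in $\mathcal D$ since $\mathcal D$ is a box, hence convex), we get
\[
|F_i(h)-F_i(l)|\le \sum_{j=1}^3 \left(\max_{\mathcal D}\left|{\partial F_i\over\partial h_j}\right|\right)|h_j-l_j|\le \sum_{j=1}^3 \theta_{ij}\,|h_j-l_j|\le \sum_{j=1}^3\theta\,\|h-l\| = 3\theta\|h-l\|.
\]
Taking the maximum over $i=1,2,3$ on the left gives $\|F(h)-F(l)\|\le 3\theta\|h-l\|$, which is exactly (a). The convexity of $\mathcal D$ and the fact that $\theta_{ij}$ was defined precisely as the sup of $|\partial F_i/\partial h_j|$ over $\mathcal D$ are what make this go through cleanly.

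For part (b), the key observation is that $F(0)=0$: when $h_1=h_2=h_3=0$ each numerator and denominator in $F_i$ becomes $P_{i,0}+P_{i,1}+P_{i,2}+P_{i,3}=1$ (using $\sum_j P_{i,j}=1$) and $P_{0,0}+P_{0,1}+P_{0,2}+P_{0,3}=1$, so the ratio is $1$ and its logarithm is $0$. Since $0\in\mathcal D$ (because $z_i^-\le 1\le z_i^+$, the constant solution $z\equiv 1$ lying in the invariant box from the previous lemma), I can apply part (a) with $l=0$ to obtain
\[
\|F(h)\| = \|F(h)-F(0)\|\le 3\theta\|h-0\| = 3\theta\|h\|,
\]
which is (b). Thus (b) is an immediate corollary of (a) once $F(0)=0$ and $0\in\mathcal D$ are noted.

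The only genuinely nontrivial ingredient is the claim, asserted just before the lemma, that condition (\ref{*a}) forces $\theta_{ij}<1$; but since the lemma statement takes $\theta<1$ as already established, I would treat it as given and not re-derive it. The main (minor) obstacle is bookkeeping: being careful that all evaluations stay inside the compact box $\mathcal D$ so that the maxima defining $\theta_{ij}$ are actually attained and dominate the partial derivatives along the segments used, and checking that $0\in\mathcal D$. Neither of these is deep — the first follows from convexity of $\mathcal D$, the second from the bounds $z_i^-\le 1\le z_i^+$ coming from the fixed point $z\equiv 1$ of (\ref{*b}). So I expect the proof to be short, essentially two applications of the multivariate mean value inequality plus the normalization $F(0)=0$.
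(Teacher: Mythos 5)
Your proof is correct and follows essentially the same route as the paper: the multivariate mean value inequality on the convex box $\mathcal D$ bounding each $|\partial F_i/\partial h_j|$ by $\theta_{ij}\le\theta$ for part (a), and part (b) obtained from (a) with $l=(0,0,0)$ using $F(0,0,0)=0$ (which follows from $\sum_j P_{i,j}=1$). Your added remarks on the convexity of $\mathcal D$ and on $0\in\mathcal D$ are just the bookkeeping the paper leaves implicit.
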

\begin{proof} a)
We have
$$\|F(h)-F(l)\|=\max_{i=1,2,3}\{|F_i(h)-F_i(l)|\}\leq $$
$$\max_{i=1,2,3}\bigg\{\bigg|{\partial F_i\over \partial h_{1}}\bigg ||h_{1}-l_{1}|+\bigg|{\partial F_i\over \partial h_2}\bigg ||h_2-l_2|+
\bigg|{\partial F_i\over \partial h_3}\bigg ||h_3-l_3|\bigg\}\leq 3\t
\|h-l\|.$$

b) follows from a) taking into account that
$F(0,0,0)=0$  by letting $l=(0,0,0)$.
\end{proof}
\begin{thm}\label{t9} Under condition (\ref{*a}) and $$3k\t<1$$
the system of equations (\ref{*1})
has a unique solution
$h_{1,x}=h_{2,x}=h_{3,x}=0.$  Consequently there  exists a unique splitting Gibbs measure.
\end{thm}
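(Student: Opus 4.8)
The plan is to show that under the hypotheses the map $F$ is a contraction on $\mathcal{D}$ with respect to the sup-norm, and then to feed this into the recursion~(\ref{*1}) to conclude that every solution must coincide with the trivial one. First I would invoke the previous lemma (the one establishing the a priori bounds $z_i^-\le z_{i,x}\le z_i^+$): any solution $h_x=(\ln z_{1,x},\ln z_{2,x},\ln z_{3,x})$ of~(\ref{*1}) automatically lies in the box $\mathcal{D}=\prod_i[\ln z_i^-,\ln z_i^+]$ for every $x\in V$. This is the step that confines the whole argument to the compact set on which $\theta$ and Lemma~\ref{l8} are available; without it the contraction estimate would be vacuous.

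Next I would iterate~(\ref{*1}) down the tree. Fix a vertex $x$ and let $h_x$ be the value of a solution there. Writing $h_x=\sum_{y\in S(x)}F(h_y)$ and using part~a) of Lemma~\ref{l8} on each of the $k$ successors, one gets
\begin{equation}
\|h_x - h'_x\| \le k\cdot 3\t\,\max_{y\in S(x)}\|h_y - h'_y\|
\label{ee}
\end{equation}
for any two solutions $h,h'$; more simply, applying part~b) termwise, $\|h_x\|\le 3k\t\,\max_{y\in S(x)}\|h_y\|$. Now iterate $n$ times: for any fixed $x$,
$$\|h_x\|\le (3k\t)^n \max_{y\in W_n(x)}\|h_y\|,$$
where $W_n(x)$ denotes the set of vertices at distance $n$ below $x$. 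Since every $h_y$ lies in the fixed bounded box $\mathcal{D}$, the right-hand side is bounded by $(3k\t)^n\cdot C$ with $C=\max_{h\in\mathcal{D}}\|h\|<\infty$, and the hypothesis $3k\t<1$ forces this to tend to $0$ as $n\to\infty$. Hence $h_x=0$ for every $x\in V$, i.e. $z_{i,x}\equiv 1$, which is the unique solution.

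Finally, uniqueness of the solution of~(\ref{*1}) translates, via Proposition~\ref{p4}, into the statement that there is exactly one compatible family $\{\mu^{(n)}\}$, hence (by the standard Kolmogorov extension, which defines the splitting Gibbs measure as the projective limit) a unique splitting Gibbs measure. I expect the only genuinely delicate point to be the bookkeeping in the iteration~(\ref{ee}): one must be careful that the factor $3$ from Lemma~\ref{l8} and the factor $k$ from the number of successors combine to exactly $3k\t$ at each level, so that the product telescopes to $(3k\t)^n$ and the smallness hypothesis can be applied. Everything else—the boundedness of $\mathcal{D}$, the bound $\theta_{ij}<1$ coming from~(\ref{*a}), and the passage from recursions to measures—is already in place from the earlier results in this section.
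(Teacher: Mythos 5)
Your proposal is correct and follows essentially the same route as the paper: confine any solution to the compact box $\mathcal{D}$ via the a priori bounds, apply Lemma~\ref{l8}(b) to each of the $k$ successors to get $\|h_x\|\le 3k\theta\max_{y\in S(x)}\|h_y\|$, iterate to obtain $(3k\theta)^n$ times a uniform constant, and let $n\to\infty$. The only (cosmetic) difference is that you take the maximum over the whole sphere $W_n(x)$ rather than following a single maximizing branch as the paper does, which if anything makes the bookkeeping cleaner.
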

\begin{proof}
 Using (\ref{*1}) and Lemma \ref{l8} we have
$$\|h_x\|\leq \sum_{y\in S(x)}\|F(h_y)\|\leq
k\max_{y\in S(x)}\|F(h_y)\| = (3k\t)\|h_{\tilde y}\|.$$ Iterating
this inequality leads to
\begin{equation}\label{e21}
 \|h_x\|\leq (3k\t)^n \|h_u\|,
 \end{equation}
where $u$ is  such that $d(x,u)=n$.
Since  $\|h_u\|\leq C=\max\{\ln z^+_{1}, \ln
z^+_2, \ln z^+_3\}$,  we get $h_x\equiv 0$.
This completes
the proof.
\end{proof}
\begin{rk} To check the condition of Theorem \ref{t9} one needs  a
solution of the system (\ref{*b}). But the analysis of solutions of
(\ref{*b}) is rather tricky.
However our construction gives a convenient way
to check the condition.
Namely,   one can check the condition
$3k\t^{(m)}<1$ where $\t^{(m)}=\max_{i,j}\{\t^{(m)}_{i,j}\}$.
Here $\t^{(m)}_{i,j}$ is defined by (\ref{thetaij})
with $z^{\pm}_i$ replaced by  $z^{\pm}_{i,m}$.
By construction  $\t^{(m)}_{i,j}\geq \t_{i,j}$
and $\lim_{m\to\infty}\t^{(m)}_{i,j}=\t_{i,j}.$ Thus $\t^{(m)}$ gives an
approximation for $\t.$
\end{rk}

\section{The diamond graph}
 Consider the  graph
 \begin{equation}\label{gi}
 \mathcal G_{\rm diamond}
 =\{(0,0),(0,2),(1,0),(1,2),(2,1),(2,3),(3,1),(3,3)\}.
 \end{equation}
 It is the graph shown in Fig.~1. It may be seen as compatibility rules on edges for a two state model.

Consider then the matrix
 \begin{equation}\label{ms}{\mathbf P}=
\left(
\begin{array}{cccc}
P_{0,0}=\alpha &P_{0,1}=0&P_{0,2}=1-\alpha &P_{0,3}=0
\\[2mm]
P_{1,0}=\beta &P_{1,1}=0 &P_{1,2}=1-\beta & P_{1,3}=0
\\[2mm]
P_{2,0}=0 &P_{2,1}=1-\beta & P_{2,2}=0&P_{2,3}=\beta
\\[2mm]
P_{3,0}=0&P_{3,1}=1-\alpha &P_{3,2}=0&P_{3,3}=\alpha
\\[2mm]
\end{array}
\right),
\end{equation}
where $\alpha, \beta\in (0,1)$.
This is a simplified version of the  diamond HC model with obvious symmetries between the parameters.

The corresponding set of recursive equations (\ref{et}) reads

\begin{equation}\label{et1}
f_x= \prod_{y\in S(x)}{\beta+(1-\beta)g_y\over
\alpha+(1-\alpha)g_y},\ \
g_x= \prod_{y\in S(x)}{\beta h_y+(1-\beta)f_y\over
\alpha+(1-\alpha)g_y}, \ \
h_x= \prod_{y\in S(x)}{\alpha h_y+(1-\alpha)f_y\over
\alpha+(1-\alpha)g_y},
\end{equation}
where $f_x=z'_{1,x}, \ \ g_x=z'_{2,x},\ \ h_x=z'_{3,x}$.

\subsection{Translation invariant measures}

\subsubsection{}

In this subsection we  look for solutions of the form
$f_x=f, \, g_x=g, \, h_x=h,$ for all    $x \in V$.

In this situation  we get from (\ref{et1}):

\begin{equation}\label{etu}
f= \left({\beta+(1-\beta)g\over
\alpha+(1-\alpha)g}\right)^k,\ \
g= \left({\beta h+(1-\beta)f\over
\alpha+(1-\alpha)g}\right)^k, \ \
h= \left({\alpha h+(1-\alpha)f\over
\alpha+(1-\alpha)g}\right)^k.
\end{equation}
Denoting $u=f^{1/k}$, $v=g^{1/k}$ and $w=h^{1/k}$ we obtain
\begin{equation}\label{ett1}
u= {\beta+(1-\beta)v^k\over
\alpha+(1-\alpha)v^k},\ \
v= {\beta w^k+(1-\beta)u^k\over
\alpha+(1-\alpha)v^k}, \ \
w= {\alpha w^k+(1-\alpha)u^k\over
\alpha+(1-\alpha)v^k}.
\end{equation}
The expression of $w$ as a function of $v$ reads
$$w=\left\{\beta^{-1}\left[v(\alpha+(1-\alpha)v^k)-(1-\beta)\left({\beta+(1-\beta)v^k\over
\alpha+(1-\alpha)v^k}\right)^k\right]\right\}^{1/k}.$$
Then we get

$$
v=\eta(v)\equiv {1\over \alpha+(1-\alpha)v^k}\left[(1-\beta)\left({\beta+(1-\beta)v^k\over
\alpha+(1-\alpha)v^k}\right)^k+\right.$$
\begin{equation}\label{er}
\left.\beta^{1-k}\left(\alpha v+(\beta-\alpha){(\beta+(1-\beta)v^k)^k\over
(\alpha+(1-\alpha)v^k)^{k+1}}\right)^k\right].
\end{equation}

\begin{lemma}\label{lt}
The function $\eta$ has the following properties:
\begin{itemize}
\item[1.] $\eta$ is a bounded function and $\eta(0)>0$, $\eta(+\infty)<+\infty$.
\item[2.] $\eta(1)=1$,\, $\eta'(1)=k\left[2\alpha-(1+k(\beta-\alpha))^2+k(\beta^2-\alpha^2)\right].$
\end{itemize}
\end{lemma}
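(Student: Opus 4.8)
The plan is to verify the two assertions of Lemma~\ref{lt} by direct analysis of the explicit formula \eqref{er} for $\eta$. For part~1, the key observation is that the two inner fractions appearing in \eqref{er} are of the same type as the functions $f_i$ studied in Lemma~1: namely $\frac{\beta+(1-\beta)v^k}{\alpha+(1-\alpha)v^k}$ is a ratio of affine functions of $t=v^k$ with positive coefficients, hence it is monotone and bounded, taking values between $\beta/\alpha$ and $(1-\beta)/(1-\alpha)$ for $v\in(0,\infty)$. Similarly the prefactor $\frac{1}{\alpha+(1-\alpha)v^k}$ is bounded by $1/\alpha$. I would substitute these bounds into \eqref{er} to exhibit explicit finite upper and lower bounds for $\eta$ on $(0,\infty)$, and then compute the limits at $v=0$ and $v\to+\infty$ by letting $v^k\to 0$ (resp.\ $v^k\to\infty$) in each fraction: at $v=0$ one gets $\eta(0)=\frac{1}{\alpha}\big[(1-\beta)(\beta/\alpha)^k+\beta^{1-k}(\beta-\alpha)^k\alpha^{-k}\big]$, which is a positive real number (so $\eta(0)>0$), and at $v\to\infty$ the dominant powers of $v$ cancel between numerator and denominator, giving a finite limit $\eta(+\infty)<+\infty$.

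For part~2, the value $\eta(1)=1$ should follow by plugging $v=1$ into \eqref{er}: both inner fractions become $\frac{\beta+(1-\beta)}{\alpha+(1-\alpha)}=1$, the prefactor becomes $\frac{1}{\alpha+(1-\alpha)}=1$, the first bracketed term is $(1-\beta)\cdot 1=1-\beta$, and the second is $\beta^{1-k}(\alpha+(\beta-\alpha)\cdot 1)^k=\beta^{1-k}\beta^k=\beta$, so $\eta(1)=1-\beta+\beta=1$. The derivative $\eta'(1)$ is the genuine computation: I would write $\eta(v)=A(v)\big[B(v)+C(v)\big]$ with $A(v)=(\alpha+(1-\alpha)v^k)^{-1}$, $B(v)=(1-\beta)\phi(v)^k$ where $\phi(v)=\frac{\beta+(1-\beta)v^k}{\alpha+(1-\alpha)v^k}$, and $C(v)=\beta^{1-k}\psi(v)^k$ where $\psi(v)=\alpha v+(\beta-\alpha)\frac{\phi(v)^k}{\alpha+(1-\alpha)v^k}$, then apply the product and chain rules at $v=1$. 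The needed ingredients are the values and first derivatives at $v=1$: $A(1)=1$, $A'(1)=-k(1-\alpha)$; $\phi(1)=1$, $\phi'(1)=k(\beta-\alpha)$ (derivative of a ratio of affine functions of $v^k$); hence $B(1)=1-\beta$, $B'(1)=(1-\beta)k\phi'(1)=(1-\beta)k^2(\beta-\alpha)$; and for $\psi$, $\psi(1)=\alpha+(\beta-\alpha)=\beta$ with $\psi'(1)=\alpha+(\beta-\alpha)\big[k\phi'(1)-k(1-\alpha)\big]=\alpha+(\beta-\alpha)k(\beta-\alpha-1+\alpha)=\alpha-(\beta-\alpha)k(1-\beta+\alpha-\alpha)$, wait—more carefully $\psi'(1)=\alpha+(\beta-\alpha)\,\frac{d}{dv}\big[\phi(v)^k A(v)\big]\big|_{v=1}=\alpha+(\beta-\alpha)\big(k\phi'(1)+A'(1)\big)=\alpha+(\beta-\alpha)\big(k^2(\beta-\alpha)-k(1-\alpha)\big)$, so $C(1)=\beta^{1-k}\beta^k=\beta$ and $C'(1)=\beta^{1-k}k\psi(1)^{k-1}\psi'(1)=k\psi'(1)$. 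Assembling, $\eta'(1)=A'(1)\big(B(1)+C(1)\big)+A(1)\big(B'(1)+C'(1)\big)=-k(1-\alpha)\cdot 1+(1-\beta)k^2(\beta-\alpha)+k\psi'(1)$, and expanding $\psi'(1)$ and collecting terms in $k$ and $k^2$ should reproduce the stated expression $k\big[2\alpha-(1+k(\beta-\alpha))^2+k(\beta^2-\alpha^2)\big]$.

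The main obstacle is purely the bookkeeping in the $\eta'(1)$ computation: there are nested $k$-th powers, and one must differentiate the composite $\psi$, which itself contains $\phi^k$ divided by an affine function, so a sign error or a dropped factor of $k$ is easy to make. I would guard against this by expanding everything to a polynomial in $k$ (degree $2$ in $k$ inside the bracket), checking the $k^0$, $k^1$, $k^2$ coefficients separately against the expansion of the claimed answer $2\alpha-(1+k(\beta-\alpha))^2+k(\beta^2-\alpha^2)=(2\alpha-1)+k\big[(\beta^2-\alpha^2)-2(\beta-\alpha)\big]-k^2(\beta-\alpha)^2$, i.e.\ constant term $2\alpha-1$, linear coefficient $(\beta-\alpha)(\beta+\alpha-2)$, quadratic coefficient $-(\beta-\alpha)^2$. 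Part~1 is routine by comparison with the boundedness argument already used for the $f_i$ in Lemma~1, and the value $\eta(1)=1$ is immediate; only $\eta'(1)$ requires care.
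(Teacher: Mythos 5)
The paper's own ``proof'' of this lemma is the single sentence that it follows from tedious but straightforward computations, so your plan --- writing $\eta=A\cdot(B+C)$ with $A=(\alpha+(1-\alpha)v^k)^{-1}$, $B=(1-\beta)\phi^k$, $C=\beta^{1-k}\psi^k$, $\psi=\alpha v+(\beta-\alpha)\phi^kA$, and differentiating at $v=1$ --- is exactly the intended route. However, your computation contains a sign error that would prevent it from reproducing the stated formula. For $\phi(v)=\frac{\beta+(1-\beta)v^k}{\alpha+(1-\alpha)v^k}$ the quotient rule at $v=1$ gives
\[
\phi'(1)=k(1-\beta)-k(1-\alpha)=k(\alpha-\beta),
\]
not $k(\beta-\alpha)$ as you wrote. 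Propagating your value yields a bracket whose $k^2$-coefficient is $+(\beta-\alpha)^2$ and whose $k$-coefficient is $-(\beta-\alpha)^2$, neither of which matches your (correct) expansion of the target, namely $(2\alpha-1)+k(\beta-\alpha)(\beta+\alpha-2)-k^2(\beta-\alpha)^2$; your own coefficient-by-coefficient safeguard would flag this, but the proof as written fails. With the corrected $\phi'(1)$ everything closes: $B'(1)=k^2(1-\beta)(\alpha-\beta)$, $\psi'(1)=\alpha+(\beta-\alpha)\bigl(k^2(\alpha-\beta)-k(1-\alpha)\bigr)$, and $\eta'(1)=-k(1-\alpha)+B'(1)+k\psi'(1)$ has $k$-part $k(2\alpha-1)$, $k^2$-part $k^2(\alpha-\beta)(2-\alpha-\beta)$ and $k^3$-part $-k^3(\beta-\alpha)^2$, which is the claimed expression.

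There is also a gap in part 1. Your closed form for $\eta(0)$ is mis-simplified: the second summand is $\beta^{1-k}\bigl((\beta-\alpha)\beta^k\alpha^{-(k+1)}\bigr)^k$, not $\beta^{1-k}(\beta-\alpha)^k\alpha^{-k}$. More seriously, asserting that this quantity ``is a positive real number'' ignores that $(\beta-\alpha)^k<0$ when $\beta<\alpha$ and $k$ is odd, so positivity of $\eta(0)$ requires the first summand $(1-\beta)(\beta/\alpha)^k$ to dominate --- and it does not always do so (e.g.\ $k=3$, $\alpha=0.05$, $\beta=0.2/7$ gives $\eta(0)<0$). For even $k$ the claim is immediate since the $k$-th power is nonnegative; for odd $k$ either an additional hypothesis or a restriction is needed, and no argument of the form you sketch can close this. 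Boundedness and $\eta(+\infty)<+\infty$ are fine, but note that the bound cannot come from bounding $\psi$ alone (it grows like $\alpha v$); you need the prefactor $A(v)\sim v^{-k}$ to cancel $\psi(v)^k\sim\alpha^kv^k$, which you do correctly invoke for the limit at infinity.
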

\begin{proof} This  results from tedious but straightforward  computations.
\end{proof}

\begin{thm} If $\eta'(1)>1$, (i.e., $\alpha>{1\over k}$, $\alpha-{1\over k}<\beta<{k+1\over k-1}\left(\alpha-{1\over k}\right)$) then
there exist at least three translation-invariant Gibbs measures.
\end{thm}

\begin{proof} By Lemma \ref{lt},  $v=1$ is a solution of (\ref{er}).
When  $\eta'(1)>1$, $v=1$ is unstable.
So there exists a  small neighborhood $(1-\varepsilon, 1+\varepsilon)$ of $v=1$  such that   for $v\in (1-\varepsilon, 1)$ $\eta(v)<v$, and for $v\in (1, 1+\varepsilon)$
$\eta(v)>v$.
Since $\eta(0)>0$,  there exists a solution $v^*$
between $0$ and $1$. Similarly since $\eta(+\infty)<+\infty$ there is another solution $v^{**}$ between $1$ and $+\infty$. Thus, there exist at least three solutions. This completes the proof.
\end{proof}

The corresponding phase diagram is shown in Fig.~2.

\begin{minipage}[c]{9cm}
		\includegraphics[width=8cm]{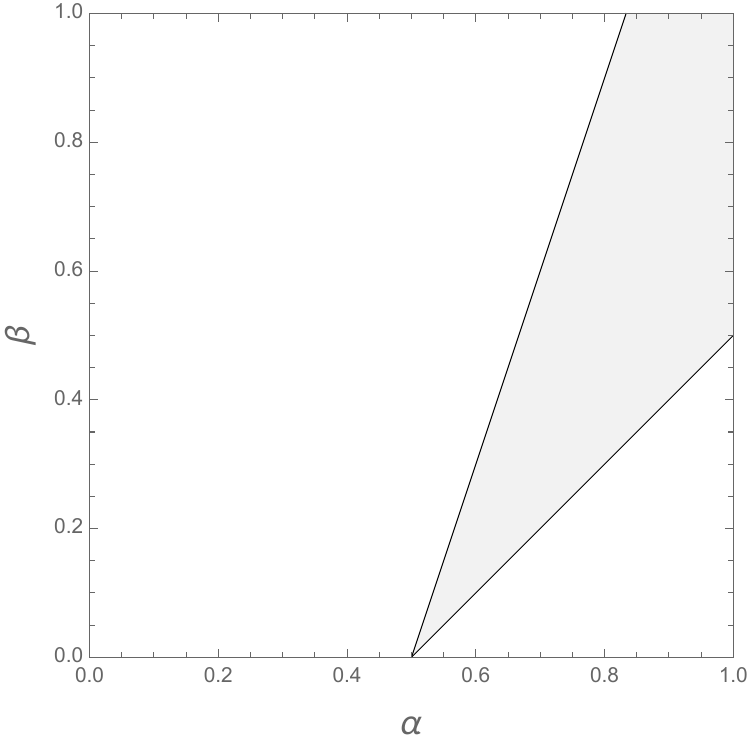}
\end{minipage}\hfill
\begin{minipage}[t]{9cm}
	      \includegraphics[width=3cm,height=3cm]{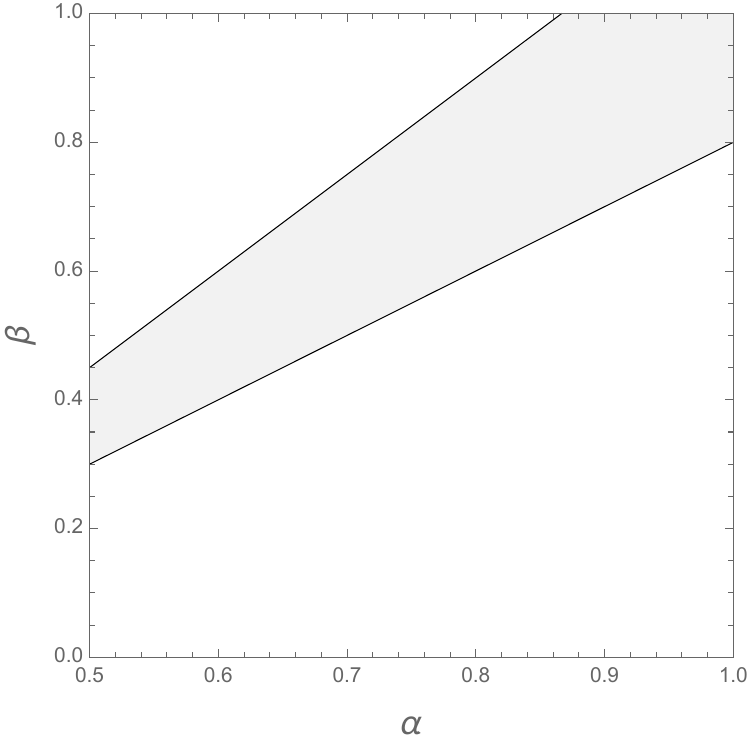}

\end{minipage}
{\footnotesize {\noindent Fig~2. Phase diagram  for $k=2$.
At least 3 solutions exist in the shaded area. Solution is unique in the white area.
Excerpt shows the case $k=5$.}}
\bigskip

\subsubsection{}
In this subsection we  look for solutions of the form
 $f_x=g_x$, $h_x=1$.
 Note that $f_x=g_x$, $h_x=1$ satisfies the system of equations (\ref{et1}) for any function $f_x$ which satisfies the following equation

\begin{equation}\label{et2}
f_x= \prod_{y\in S(x)}{\beta+(1-\beta)f_y\over
\alpha+(1-\alpha)f_y}.
\end{equation}
\begin{rk}
Recall that the functional equation of the Ising model on the Cayley tree is given by:

\begin{equation}\label{et3}
f_x= \prod_{y\in S(x)}{1+\theta f_y\over
\theta+f_y},
\end{equation}
where $\theta=e^{2J/T}$, $J$ denoting the strength of the interaction and $T$ the temperature.
The equations  (\ref{et2})  thus coincide with (\ref{et3} ) when
\begin{equation}
\alpha = {\theta \over \theta +1}
,
\quad
\beta= {1\over \theta +1}.
\end{equation}
Consequently all known results for Ising model can be reformulated for these particular values.
\end{rk}

To  give  a non-uniqueness condition for the  solutions of  the equation (\ref{et2}), we will use the following

\begin{lemma}\label{l1}
Let
\begin{equation}\label{ett}
z=\left({\beta+(1-\beta)z\over \alpha+(1-\alpha)z}\right)^k, \ \ z>0.
\end{equation}
Then,
\begin{itemize}
\item[1)] If $(\alpha,\beta)\in \left\{(x,y)\in [0,1]^2: x\leq {y(k+1)^3\over 4ky+(k-1)^2}\right\}$ the equation (\ref{ett}) has a unique solution $z=1$.

\item[2)] If $(\alpha,\beta)\in \left\{(x,y)\in [0,1]^2: x> {y(k+1)^2\over 4ky+(k-1)^2}\right\}$ the equation (\ref{ett}) has three solutions.
    \end{itemize}
\end{lemma}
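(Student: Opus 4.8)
The plan is to convert \eqref{ett} into a polynomial equation, split off the trivial root $z=1$, and then count the positive roots of the quotient by means of a convexity property.

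Set $\varphi(z)=\dfrac{\beta+(1-\beta)z}{\alpha+(1-\alpha)z}$, so that \eqref{ett} reads $z=\varphi(z)^{k}$, and pass to the strictly increasing variable $w=z^{1/k}$ on $(0,\infty)$; the equation becomes $w=\varphi(w^{k})$, i.e.
\[
\Psi(w):=(1-\alpha)w^{k+1}-(1-\beta)w^{k}+\alpha w-\beta=0 .
\]
Since $\Psi(1)=0$, factor $\Psi(w)=(w-1)Q(w)$ with
\[
Q(w)=(1-\alpha)w^{k}+(\beta-\alpha)\bigl(w^{k-1}+w^{k-2}+\dots+w\bigr)+\beta ,
\]
so that the solutions of \eqref{ett} other than $z=1$ are exactly the $k$-th powers of the positive roots of $Q$. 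Two quick remarks dispose of part of the parameter range: if $\alpha\le\beta$ then every coefficient of $Q$ is $\ge 0$ and $Q(w)\ge\beta>0$ on $(0,\infty)$, and if $k=1$ then $Q(w)=(1-\alpha)w+\beta$ has its only root at $-\beta/(1-\alpha)<0$; in both cases $z=1$ is the unique solution. So only $\alpha>\beta$ with $k\ge 2$ needs work, and the critical value of $\alpha$ found below is always $>\beta$, so nothing is lost.

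For $\alpha>\beta$, rewrite $Q(w)=0$ as $\alpha=\rho(w)$ with
\[
\rho(w)=\beta+\frac{(1-\beta)w^{k}+\beta}{w^{k}+w^{k-1}+\dots+w}\,,\qquad w>0
\]
(the denominator is positive), so the positive roots of $Q$ are the points at which the line $y=\alpha$ meets the graph of $\rho$. Here $\rho(0^{+})=+\infty$, $\rho(w)\to 1$ as $w\to\infty$, and $\rho(w)<1$ for $w$ large (inspect $\rho(w)-1$), so $\rho$ dips below its limiting value $1$. To bound how often $y=\alpha$ can cut the graph, consider $\psi(w)=\varphi(w^{k})-w$: a short computation shows $\psi''(w)$ has the sign of $(k-1)\alpha-(k+1)(1-\alpha)w^{k}$, so $\psi$ is convex on $(0,w_{0})$ and concave on $(w_{0},\infty)$ with $w_{0}=\bigl((k-1)\alpha/((k+1)(1-\alpha))\bigr)^{1/k}$; a function that is convex then concave has at most three zeros, so $\rho$ takes each value at most twice, and together with the boundary behaviour and the dip below $1$ this forces $\rho$ to have a single interior critical point, necessarily a minimum, at some $w_{*}>0$. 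Consequently \eqref{ett} has exactly one solution ($z=1$) when $\alpha<\rho(w_{*})$, and, when $\rho(w_{*})<\alpha<1$, exactly two positive roots of $Q$ (one on each branch of $\rho$) and hence exactly three (generically distinct) solutions.

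It remains to evaluate $\rho(w_{*})$ — equivalently, the parameter value at which $Q$ acquires a double positive root, i.e.\ at which the resultant of $Q$ and $Q'$ vanishes — and I expect this elimination to be the main obstacle, the shape analysis being routine once the sign of $\psi''$ is recorded (cf.\ the computation behind Lemma~\ref{lt}). For $k=2$ it is transparent: $Q(w)=(1-\alpha)w^{2}+(\beta-\alpha)w+\beta$ has discriminant $(\alpha+\beta)^{2}-4\beta$, which is $\ge 0$ exactly when $\alpha+\beta\ge 2\sqrt{\beta}$, and then $Q$ has two positive roots (their product $\beta/(1-\alpha)$ and, since $\alpha>\beta$, their sum being positive); so the threshold is $\alpha=2\sqrt{\beta}-\beta$. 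For general $k$ the same scheme — locate $w_{*}$ from $\rho'(w_{*})=0$ and substitute into $\rho$ — yields the critical curve separating (1) from (2); the expression so obtained should be reconciled with the one stated, since for $k=2$ the discriminant computation gives $\alpha=2\sqrt{\beta}-\beta$, which coincides with $\beta(k+1)^{2}/(4k\beta+(k-1)^{2})$ only at $\beta=(k-1)/(2k)$ (the parameter value corresponding, via the preceding remark, to the critical Ising model).
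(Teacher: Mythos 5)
Your route is genuinely different from the paper's. The paper substitutes $x=\frac{1-\beta}{\beta}z$, which turns (\ref{ett}) into $Ax=\bigl(\frac{1+x}{B+x}\bigr)^k$ with $A=\frac{\beta(1-\alpha)^k}{(1-\beta)^{k+1}}$ and $B=\frac{\alpha(1-\beta)}{\beta(1-\alpha)}$, and then invokes Preston's Proposition 10.7: uniqueness holds when $k=1$ or $B\le\bigl(\frac{k+1}{k-1}\bigr)^2$, while three solutions occur when $B>\bigl(\frac{k+1}{k-1}\bigr)^2$ \emph{and} $\nu_1(B,k)<A<\nu_2(B,k)$. The curve in the statement is just the translation of $B=\bigl(\frac{k+1}{k-1}\bigr)^2$ into the $(\alpha,\beta)$ variables. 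Your factorization $\Psi=(w-1)Q$ together with the convex/concave analysis of $\psi$ is a self-contained alternative and is essentially sound; the only soft spot is the level $\alpha=\rho(1)=\beta+\frac1k$, where one of the two roots of $Q$ collides with $w=1$, so that only two distinct solutions survive --- a harmless, measure-zero exception for your single-minimum conclusion, though already a literal failure of ``three solutions''.

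The discrepancy you flag at the end is not something to be ``reconciled'': it exposes a genuine error in part 2). Preston's three-solution criterion is the \emph{conjunction} of the condition on $B$ with $\nu_1<A<\nu_2$, and the paper's proof checks only the former. Your $k=2$ computation gives the true threshold: the discriminant $(\alpha+\beta)^2-4\beta$ vanishes at $\alpha=2\sqrt{\beta}-\beta$, which lies strictly above $\frac{9\beta}{8\beta+1}$ except at the Ising point $\beta=\frac14$. Concretely, at $(\alpha,\beta)=(2/5,\,1/16)$ one has $\alpha=0.4>3/8=\frac{9\beta}{8\beta+1}$, yet $(\alpha+\beta)^2-4\beta=(0.4625)^2-0.25<0$, so $Q$ has no real roots and $z=1$ is the unique solution; the same follows from Preston's statement, since there $B=10>9$ but $A=\frac{0.0625\cdot(0.6)^2}{(0.9375)^3}\approx 0.0273<\nu_1=\frac1{32}$. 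Hence part 2) fails on the whole strip between the two curves. Part 1) is the safe implication of Preston's criterion and stands (with the evident typo $(k+1)^3\to(k+1)^2$, cf.\ Theorem~\ref{t1}). What your write-up still lacks as a proof of a \emph{corrected} statement is the elimination step for general $k$ (the double-root locus of $Q$), which you rightly identify as the remaining work; but no completion of that step can recover the curve as printed.
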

\begin{proof} Denoting $x={1-\beta\over \beta}z$, $A={\beta(1-\alpha)^k\over (1-\beta)^{k+1}}$ and $B={\alpha(1-\beta)\over \beta(1-\alpha)}$ we get
\begin{equation}\label{epu}
Ax=\left({1+x\over B+x}\right)^k.
 \end{equation}
 This equation is studied in \cite{P}, Proposition 10.7. The equation (\ref{epu}) with $x\geq 0$, $k\geq 1$,
$A, B >0$ has a unique solution if either $k=1$ or $B\leq ({k+1
\over k-1})^2$. If $k>1$ and $B>({k+1 \over k-1})^2$ then there
exist $\nu_1(B,k)$, $\nu_2(B,k)$, with $0<\nu_1(B,k)< \nu_2(B,k)$,
such that the equation has three solutions if $\nu_1(B,k)<A<
\nu_2(B,k)$ and has two if either $A=\nu_1(B,k)$ or $A= \nu_2(B,k)$.
In fact:

\begin{equation}
\label{line}
\nu_i(B,k)={1\over x_i}\left({1+x_i \over B+x_i}\right)^k,
 \end{equation}
where $x_1,x_2$ are the solutions of
$$ x^2+[2-(B-1)(k-1)]x+B=0.$$
The critical line is obtained by inserting the critical value
  $B=\left({k+1\over k-1}\right)^2$ (equivalent to  $\alpha= {\beta(k+1)^2\over 4k\beta+(k-1)^2}$) in (\ref{line}).
\end{proof}

\begin{lemma}\label{l2}
The solutions  $f_x$  of (\ref{et2}) satisfy
$$ z^-\leq f_x\leq z^+,$$
where $z^-\leq 1\leq z^+$ solve the equation (\ref{ett}).
\end{lemma}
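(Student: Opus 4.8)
The plan is to read both (\ref{et2}) and (\ref{ett}) through the single one--variable map
$$\varphi(t)={\beta+(1-\beta)t\over\alpha+(1-\alpha)t},\qquad \psi(t)=\varphi(t)^{k},$$
so that (\ref{et2}) reads $f_x=\prod_{y\in S(x)}\varphi(f_y)$ with $|S(x)|=k$, and (\ref{ett}) reads $z=\psi(z)$. Since $\varphi'(t)=(\alpha-\beta)\big(\alpha+(1-\alpha)t\big)^{-2}$, the map $\varphi$ is strictly monotone on $(0,+\infty)$ — increasing if $\alpha>\beta$, decreasing if $\alpha<\beta$, constant ($\varphi\equiv1$) if $\alpha=\beta$ — and it is bounded, its values lying strictly between $\varphi(0)=\beta/\alpha$ and $\varphi(+\infty)=(1-\beta)/(1-\alpha)$. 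Hence every solution $(f_x)_{x\in V}$ of (\ref{et2}) is automatically bounded away from $0$ and $+\infty$, so $m:=\inf_{x\in V}f_x$ and $M:=\sup_{x\in V}f_x$ lie in $(0,+\infty)$; for the same reason every root of (\ref{ett}) lies in that interval, and since $z=1$ is always a root (because $\varphi(1)=1$) the root set of (\ref{ett}) is nonempty and compact, so its smallest and largest elements $z^-\le1\le z^+$ are well defined (and by Lemma~\ref{l1} there are either one or three of them).

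Treat first the principal case $\alpha>\beta$, in which $\varphi$ and hence $\psi$ are increasing. From $m\le f_y\le M$ for all $y$ and the monotonicity of $\varphi$ one gets, for each $x$,
$$\psi(m)=\varphi(m)^{k}\le\prod_{y\in S(x)}\varphi(f_y)=f_x\le\varphi(M)^{k}=\psi(M),$$
and, taking the infimum and the supremum over $x$, $m\ge\psi(m)$ and $M\le\psi(M)$. It then suffices to fix the sign of $\psi(z)-z$ off the root set: by definition of $z^\pm$ this function has no zero on $(0,z^-)$ or on $(z^+,+\infty)$, it is positive for $z$ near $0$ (as $\psi(0)=(\beta/\alpha)^{k}>0$), and it tends to $-\infty$ as $z\to+\infty$ (as $\psi$ is bounded), so $\psi(z)>z$ on $(0,z^-)$ and $\psi(z)<z$ on $(z^+,+\infty)$. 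Consequently $m<z^-$ would force $\psi(m)>m$, contradicting $m\ge\psi(m)$, and $M>z^+$ would force $\psi(M)<M$, contradicting $M\le\psi(M)$; hence $z^-\le m\le f_x\le M\le z^+$ for every $x$, which is the assertion. (Equivalently, and in the spirit of the first lemma of Section~3, one may iterate $I_0=(0,+\infty)$, $I_{n+1}=\psi(I_n)$: the endpoint sequences are monotone, converge to fixed points of $\psi$ lying in $[z^-,z^+]$, and $f_x\in I_n$ for all $n$.)

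When $\alpha<\beta$ the map $\psi$ is strictly decreasing, so (\ref{ett}) has the unique root $z=1$ and $z^-=z^+=1$; here one runs the same argument with $\Psi:=\psi\circ\psi$, which is increasing, deducing $m\ge\Psi(m)$ and $M\le\Psi(M)$ from the bounds $m\ge\psi(M)$, $M\le\psi(m)$ of the previous step, and concluding $m=M=1$, i.e.\ $f_x\equiv1$, provided $\Psi$ has no fixed point besides $1$ (the case $\alpha=\beta$ being trivial, since then $\varphi\equiv1$). I expect the only genuinely non-routine point to be exactly this last sign/fixed--point analysis: in the increasing case it is immediate once $z^\pm$ are taken to be the extreme roots, but in the decreasing case one must additionally exclude a nontrivial $2$--cycle of $\psi$ (a fixed point of $\Psi$ other than $1$), and this is where the explicit fractional--linear form of $\varphi$ — together with Lemma~\ref{l1} — has to be used; everything else is bookkeeping.
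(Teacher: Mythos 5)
Your argument for $\alpha>\beta$ is correct and is in substance the paper's own: the paper starts from the a priori bounds $(\beta/\alpha)^k<f_x<((1-\beta)/(1-\alpha))^k$ and iterates the interval under the increasing map $\psi=\varphi^k$, obtaining monotone sequences $z^{\pm}_n$ that converge to fixed points of $\psi$, whereas you take $m=\inf_x f_x$, $M=\sup_x f_x$, derive $m\ge\psi(m)$, $M\le\psi(M)$, and trap $m,M$ in $[z^-,z^+]$ by the sign of $\psi(z)-z$ outside the extreme roots. These are two phrasings of the same monotone-map argument, and either one settles that case.

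The case $\alpha<\beta$ is where your proof has a genuine gap, and it is one you have correctly diagnosed but cannot close. You reduce the claim to the assertion that $\Psi=\psi\circ\psi$ has no fixed point other than $1$, i.e.\ that the decreasing map $\psi$ has no nontrivial $2$-cycle, and leave this as a ``provided'' clause. The paper's own subsection on periodic measures shows that this assertion fails in part of that regime: for $k=2$ and $\alpha$ small, $\beta$ large, equation (\ref{zz}) has two positive roots $z_1\ne z_2$ with $\psi(z_1)=z_2$ and $\psi(z_2)=z_1$ (Theorem \ref{tp1}), and the $G^{(2)}_2$-periodic assignment $f_x\in\{z_1,z_2\}$ according to the parity of $d(x,x^0)$ is then a non-constant solution of (\ref{et2}). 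Since for $\alpha<\beta$ the decreasing map $\psi$ has the unique fixed point $z^-=z^+=1$, Lemma \ref{l2} would force $f_x\equiv 1$ there, which these solutions contradict. So your conditional clause is not a removable technicality: the lemma holds as stated only for $\alpha\ge\beta$, and the paper's one-line dismissal ``the case $\beta>\alpha$ is similar'' is itself unjustified --- in that case the interval iteration converges to the extreme $2$-cycle of $\psi$, not to solutions of (\ref{ett}). You should either restrict the statement to $\alpha\ge\beta$ or, in the decreasing case, replace the bounding pair by that $2$-cycle.
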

\begin{proof}
Let $\alpha>\beta$.

By using
properties of the function
$$\varphi(t)={\beta+(1-\beta)t\over
\alpha+(1-\alpha)t}$$
it is not difficult to see that
\begin{equation}
\label{z1}z^-_1
\equiv
\left({\beta\over \alpha}\right)^k
<f_x
<z^+_1
\equiv
\left({1-\beta\over 1-\alpha}\right)^k
\end{equation}

 Now
consider the function $\varphi$ on the interval
$[z^-_1, z^+_1]$.
A new iteration of the construction leads to
$$z^-_1<z^-_2<f_x<z^+_2<z^+_1.$$
The process of iterations give
$z^-_n<z_{i,x}<z_n^+,$
where $z^{\pm}_n, \ \ n=1,2,...$ satisfy
$$
z^-_{n+1}=\varphi^k(z^-_n),\ \
z^+_{n+1}=\varphi^k(z^+_n)
$$
It is easy to see that $z^-_n$ \ \ (resp.
$z^+_n$) are bounded increasing (resp. decreasing)
sequences.
This shows that the limits
$\lim_{n\to\infty}z^{\pm}_n=z^{\pm}$ exist.
 Moreover
$\varphi^k(z^{\pm})=z^{\pm}$.

(The case $\beta>\alpha$ is similar;
the case $\alpha=\beta$ is trivial).
\end{proof}

\begin{thm} \label{t1} \begin{itemize}
\item[1)] If $(\alpha,\beta)\in \left\{(x,y)\in [0,1]^2: x\leq {y(k+1)^2\over 4ky+(k-1)^2}\right\}$  the equation (\ref{et2}) has a unique solution $f_x\equiv 1$.

\item[2)] If $(\alpha,\beta)\in \left\{(x,y)\in [0,1]^2: x> {y(k+1)^2\over 4ky+(k-1)^2}\right\}$ the equation (\ref{et2}) has at least three solutions.

These solutions lead to  translation-invariant Gibbs measures.
\end{itemize}
\end{thm}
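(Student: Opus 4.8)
The plan is to deduce both statements directly from Lemmas~\ref{l1} and \ref{l2} together with Proposition~\ref{p4}; essentially no new computation is needed beyond what those lemmas already supply.

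For part~1, I would take an arbitrary solution $f_x$ of (\ref{et2}). Lemma~\ref{l2} confines it to the interval $[z^-,z^+]$, where $z^-\le 1\le z^+$ are solutions of the scalar equation (\ref{ett}). In the region $\alpha\le \frac{\beta(k+1)^2}{4k\beta+(k-1)^2}$ — which in the notation of the proof of Lemma~\ref{l1} is exactly $B\le\big(\tfrac{k+1}{k-1}\big)^2$ — Lemma~\ref{l1}(1) says the only positive root of (\ref{ett}) is $z=1$, so $z^-=z^+=1$ and hence $f_x\equiv 1$ is forced. This gives uniqueness.

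For part~2, suppose $\alpha>\frac{\beta(k+1)^2}{4k\beta+(k-1)^2}$. Then Lemma~\ref{l1}(2) produces three distinct positive solutions $z_1<z_2=1<z_3$ of (\ref{ett}). I would then observe that any root $z$ of (\ref{ett}) yields a constant solution of the full recursion (\ref{et2}): if $f_y\equiv z$ for the $k$ direct successors $y$ of $x$, the product in (\ref{et2}) collapses to $\big(\frac{\beta+(1-\beta)z}{\alpha+(1-\alpha)z}\big)^k$, which equals $z$ precisely by (\ref{ett}). Hence (\ref{et2}) has at least the three solutions $f_x\equiv z_1,\,1,\,z_3$. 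Finally, by the observation preceding (\ref{et2}), each such $f$ — extended by $g_x=f_x$, $h_x\equiv 1$ — solves the system (\ref{et1}), so by Proposition~\ref{p4} it determines a compatible family $\{\mu^{(n)}\}$, i.e.\ a splitting Gibbs measure; being constant in $x$ it is translation invariant, and since distinct values of $z$ induce distinct finite-dimensional distributions one obtains three distinct translation-invariant Gibbs measures.

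I expect the only slightly delicate point to be the bookkeeping on the critical curve: one must check that the threshold $\alpha=\frac{\beta(k+1)^2}{4k\beta+(k-1)^2}$ appearing in the statement is indeed the curve $B=\big(\tfrac{k+1}{k-1}\big)^2$ furnished by Proposition~10.7 of \cite{P} (as recorded in Lemma~\ref{l1}), and that the two regions partition $[0,1]^2$ up to this measure-zero curve, where only two solutions survive. Everything else is immediate from the preceding lemmas. Alternatively, one could bypass Lemma~\ref{l1} entirely and invoke the remark identifying (\ref{et2}) with the Ising recursion (\ref{et3}) under $\alpha=\theta/(\theta+1)$, $\beta=1/(\theta+1)$, quoting the classical phase transition of the Ising model on $\Gamma^k$.
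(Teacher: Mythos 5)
Your proposal is correct and follows essentially the same route as the paper: part~1 combines Lemma~\ref{l2} (which traps any solution in $[z^-,z^+]$) with Lemma~\ref{l1}(1) (which forces $z^-=z^+=1$), and part~2 reads off the three constant solutions $f_x\equiv z^-,\,1,\,z^+$ from Lemma~\ref{l1}(2). The extra details you supply — verifying that constant roots of (\ref{ett}) solve (\ref{et2}), passing to (\ref{et1}) via $g_x=f_x$, $h_x\equiv 1$, and invoking Proposition~\ref{p4} — are exactly the steps the paper leaves implicit.
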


\begin{proof} 1)
It is easy to see that $f_x\equiv 1$ is a solution of (\ref{et2}).
From Lemma \ref{l1} it follows that
under  the conditions of theorem, the equation (\ref{ett}) has unique solution $z^-=z^+=1$. Then from Lemma \ref{l2}, one gets $f_x\equiv 1$.

2) It is a consequence of Lemma \ref{l1}. In this case,  equation (\ref{et2}) has at least three constant solutions  $f_x=1$, $f_x=z^+$ and $f_x=z^-$. \\
(see Fig.~3.)
\end{proof}

\begin{minipage}[c]{9cm}
\includegraphics[width=8cm]{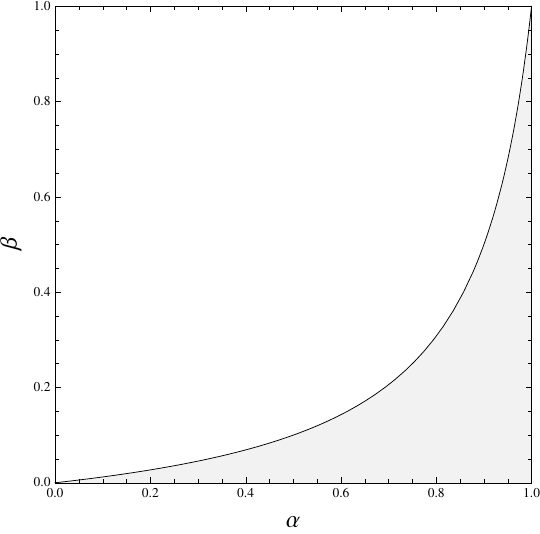}
\end{minipage}\hfill
\begin{minipage}[t]{9cm}
	         \includegraphics[width=3cm,height=3cm]{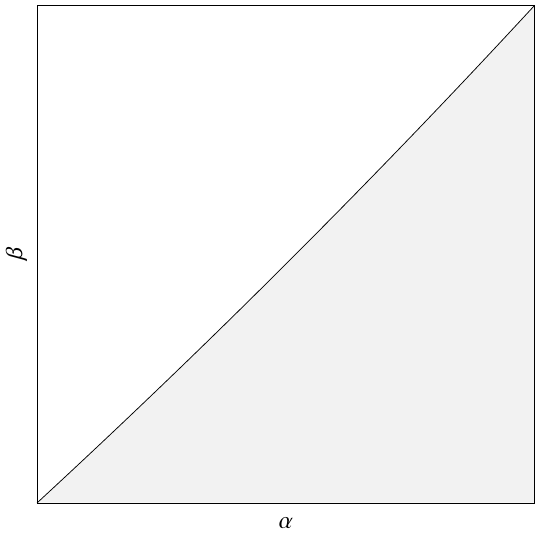}
\end{minipage}
{\footnotesize {\noindent Fig.~3. Case $f_x=g_x$, $h_x=1$ for $k=2$, equations (\ref{et1}).
At least 3 solutions exist in the shaded area. Solution is unique in the white area. Excerpt shows the case $k\gg 1$.}}
\bigskip

\subsection{Periodic measures}

In this subsection, we consider periodic solutions
of (\ref{et3}).
We will use the
group structure of the Cayley tree. It is known (see \cite{GR})
that there exists a one-to-one correspondence between the set of
vertices $V$ of a Cayley tree of order $k\geq 1$ and the group
$G_k$, free product of $k+1$ second-order cyclic groups with
generators $a_1, a_2, . . . , a_{k+1}$.

\begin{defn} Let ${\tilde G}$ be a normal subgroup of the group $G_k$. The set $z = \{z_x: x\in G_k\}$
 is said to be ${\tilde G}$-periodic if $z_{yx} =z_x$ for any $x\in G_k$ and $y\in {\tilde G}$.
 \end{defn}

\begin{defn} The Gibbs measure corresponding to a ${\tilde G}$-periodic set of quantities $z$ is said to be ${\tilde G}$-periodic.
\end{defn}

It is easy to see that a $G_k$-periodic measure is translation invariant.
Denote

$$G^{(2)}_k = \{x\in G_k: \, \mbox{the length of word} \, x \, \mbox{is even}\}.$$
This set is a normal subgroup of index two \cite{GR}. Note that $G^{(2)}_k$ is either the subset of even vertices (i.e. with even distance to the root).

The following proposition characterizes the set of all periodic solutions.

\begin{pro}\label{ty} For $\alpha\ne \beta$. Let $\tilde{G}$ be a normal subgroup of finite index in $G_k$.
Then each $\tilde{G}$- periodic solutions of equation (\ref{et2}) is either translation-invariant or $G^{(2)}_k$-
periodic.
\end{pro}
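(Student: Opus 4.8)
The statement concerns $\tilde G$-periodic solutions of the scalar equation (\ref{et2}), i.e. solutions $\{f_x : x \in G_k\}$ with $f_{yx} = f_x$ for all $y \in \tilde G$. The plan is to reduce everything to the fact that, because every vertex of $\Gamma^k$ has exactly $k$ direct successors and the right-hand side of (\ref{et2}) depends only on the values $f_y$ at those successors, a periodic solution takes only finitely many values, and these values are forced to satisfy a rigid system. First I would fix a normal subgroup $\tilde G$ of finite index and note that a $\tilde G$-periodic solution is constant on cosets, so it is determined by its values on the finite quotient $G_k/\tilde G$. The key structural input is that $G^{(2)}_k$, the index-two subgroup of even-length words, contains every normal subgroup of odd index, and more to the point: for any normal subgroup $\tilde G$ of finite index, either $\tilde G \subseteq G^{(2)}_k$ or $\tilde G \cdot G^{(2)}_k = G_k$ with $\tilde G \cap G^{(2)}_k$ of index $2$ in $\tilde G$. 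So it suffices to show that a $\tilde G$-periodic solution depends on $x$ only through the parity of the word length (that gives $G^{(2)}_k$-periodicity, which specializes to translation-invariance when the solution is in fact constant).

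The heart of the argument is a monotonicity/uniqueness observation for the one-variable map. Since each $x$ has $k$ successors $x_1,\dots,x_k$, equation (\ref{et2}) reads $f_x = \prod_{j=1}^k \varphi(f_{x_j})$ with $\varphi(t) = \tfrac{\beta+(1-\beta)t}{\alpha+(1-\alpha)t}$, and $\varphi$ is strictly monotone (increasing if $\beta > \alpha$, decreasing if $\alpha > \beta$); this is exactly where the hypothesis $\alpha \neq \beta$ enters. Now I would run the following argument: let $f$ be a $\tilde G$-periodic solution, let $m = \min_x f_x$ and $M = \max_x f_x$ (finitely many values, so attained), and let $x$ be a vertex where $f_x = M$. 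Applying $f_x = \prod_j \varphi(f_{x_j})$ and using that $\varphi$ is monotone and that the analog of Lemma \ref{l2} pins all values into the interval $[z^-, z^+]$ where $\varphi^k$ has a fixed point, I would argue that the values of $f$ at the successors of $x$ are forced, and then propagate this down the tree. In the case $\beta > \alpha$ the map $t \mapsto \varphi(t)^k$ is increasing, so a periodic solution of $f_x = \prod \varphi(f_{x_j})$ with all successors equal must have all $f_{x_j}$ equal to the common value $\varphi^k$-preimage; combined with periodicity this forces $f$ to be constant on each sphere, hence a function of parity alone. In the case $\alpha > \beta$, $\varphi$ is decreasing, so $\varphi(m)$ and $\varphi(M)$ swap the roles of max and min; one shows the value of $f$ at a vertex is determined by whether that vertex sits at even or odd distance from the ``extremal'' reference vertex, again yielding $G^{(2)}_k$-periodicity.

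Concretely, I expect the cleanest route is: (i) observe that $\tilde G$-periodicity plus finite index forces $f$ to take finitely many values; (ii) use Lemma \ref{l2} to confine all these values to $[z^-,z^+]$; (iii) show that on this interval the second-iterate map $t \mapsto \varphi(\varphi(t)^k)^k$ is increasing (true regardless of the sign of $\beta-\alpha$, since $\varphi^k$ is monotone and composing two monotone maps of the same/opposite type gives an increasing map on the appropriate domain), and that it has a unique fixed point there unless one is in the non-uniqueness regime of Lemma \ref{l1}; (iv) conclude that $f_x$ depends only on the parity of $d(x^0,x)$, i.e. $f$ is $G^{(2)}_k$-periodic, which reduces to translation-invariance exactly when $f$ is globally constant. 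The main obstacle I anticipate is step (iii)–(iv): controlling all $\tilde G$-periodic solutions simultaneously rather than just the extremal ones, i.e. ruling out a periodic solution that oscillates with period $>2$ along some branch. The way around this is to note that along any self-avoiding ray from the root the values $f_{x_n}$ satisfy a backward recursion governed by $\varphi^k$, and a bounded orbit of a monotone one-dimensional map either converges or is $2$-periodic; periodicity of $f$ then forces the relevant period to divide the (finite) index of $\tilde G$ and, via the monotone second iterate having a unique fixed point in $[z^-,z^+]$, collapses to period $1$ or $2$. This is where the bulk of the technical care goes, but it is standard one-dimensional dynamics once the confinement in (ii) is in place.
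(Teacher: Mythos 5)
Your plan misses the mechanism that actually drives this proposition, and the step you defer to ``standard one-dimensional dynamics'' does not go through. The paper's proof (following Theorem 2 of \cite{Mar1}) rests on two facts used \emph{together}: the M\"obius map $\varphi(t)=(\beta+(1-\beta)t)/(\alpha+(1-\alpha)t)$ is injective precisely when $\alpha\ne\beta$, and $\tilde G$ is \emph{normal} of finite index, so a $\tilde G$-periodic solution descends to a function on the finite quotient $G_k/\tilde G$ and the recursion (\ref{et2}) can be written at every word of a given coset. Comparing the equation at two words of the same coset whose last letters are $a_i$ and $a_{i'}$, the two products $\prod_{j\ne i}\varphi(z_{\bar x\bar a_j})$ and $\prod_{j\ne i'}\varphi(z_{\bar x\bar a_j})$ must coincide; cancelling the common factors and invoking injectivity of $\varphi$ gives $z_{\bar x\bar a_i}=z_{\bar x\bar a_{i'}}$, i.e.\ the solution takes the same value at \emph{all} neighbours $xa_1,\dots,xa_{k+1}$ of any fixed $x$. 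Iterating this equalization shows $f_x$ depends only on the parity of $|x|$, which is exactly $G_k^{(2)}$-periodicity. Nothing in your proposal performs this equalization of sibling values, and that is the whole content of the statement.

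Concretely, two steps of your plan fail. First, the ``backward recursion governed by $\varphi^k$ along a self-avoiding ray'' does not exist: the equation at $x$ is $f_x=\prod_{j=1}^k\varphi(f_{x_j})$, so a single ray controls only one of the $k$ factors; the recursion along a ray becomes autonomous only \emph{after} one knows that all siblings carry equal values, which is precisely what must be proved. A max/min propagation likewise only yields $m\ge\varphi^k(m)$ and $M\le\varphi^k(M)$, i.e.\ the confinement to $[z^-,z^+]$ of your step (ii); it does not force siblings to agree and hence does not exclude, say, an index-four periodic solution in which the successors of a vertex carry distinct values. Second, your steps (iii)--(iv) hinge on the second iterate having a unique fixed point ``unless one is in the non-uniqueness regime,'' but the proposition must hold in all regimes --- Theorem \ref{tp1} exhibits genuine $G^{(2)}_2$-periodic, non-translation-invariant solutions there --- so an argument valid only in the uniqueness region cannot deliver the classification. (A side remark: your claim that $G_k^{(2)}$ contains every normal subgroup of odd index is empty --- since $G_k$ is generated by involutions, the only normal subgroup of odd index is $G_k$ itself --- but this is not where the argument breaks.)
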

\begin{proof} It is easy to see that for $\alpha\ne \beta$ the function $\varphi(t)=(\beta+(1-\beta)t)/(\alpha+(1-\alpha)t)$ is one-to-one. Using this property together with arguments similar to the ones given in the proof of Theorem 2 in \cite{Mar1} lead to the statement.
\end{proof}

By Proposition \ref{ty}, the description of a $\tilde{G}$-periodic solutions of (\ref{et2}) is reduced to the solutions of
system (\ref{ep}) below. This system describes periodic solutions
with period two, more precisely, $G^{(2)}_k$-periodic solutions. They correspond to functions
$$f_x =\left\{\begin{array}{ll}
z_1, \ \ \mbox{if} \ \ x\in G_k^{(2)},\\
z_2, \ \ \mbox{if} \ \ x\in G_k \setminus G_k^{(2)}.
\end{array}\right.
$$
In this case, we have from (\ref{et2}):
\begin{equation}\label{ep}
z_1 =\left({\beta+(1-\beta)z_2\over \alpha+(1-\alpha)z_2}\right)^k, \ \ z_2 =\left({\beta+(1-\beta)z_1\over \alpha+(1-\alpha)z_1}\right)^k.
\end{equation}
Namely,  $z_1$ and $z_2$ satisfy
\begin{equation}\label{ef}
z = g(g(z)), \ \ \mbox{where}\ \ g(z) = \left({\beta+(1-\beta)z\over \alpha+(1-\alpha)z}\right)^k.
\end{equation}
Note that to get periodic (non translation invariant) measure we must find solutions
of (\ref{ep}) with $z_1\ne z_2$.
Obviously, such solutions are roots of the equation
\begin{equation}\label{ee} {g(g(z)) - z\over g(z) - z}= 0.
\end{equation}
For $k=2$, simple but long computations show that the last equation is equivalent to the equation
\begin{equation}\label{zz}
Az^2+Bz+C=0,
\end{equation}
where
$$A=[\alpha(1-\alpha)+(1-\beta)^2]^2, \ \  C=[\alpha^2+\beta(1-\beta)]^2,$$ $$B=4\alpha\beta(1-\alpha)(1-\beta)+2\beta(1-\beta)^3+\alpha^2(1-\beta)^2+2\alpha^3(1-\alpha)-
(1-\alpha)^2\beta^2.$$

The discriminant of the equation (\ref{zz}) has the following form
$$D=D(\alpha,\beta)=[3\alpha\beta(1-\alpha)(1-\beta)+\beta(1-\beta)^3+\alpha^3(1-\alpha)-
(1-\alpha)^2\beta^2]\times$$ $$
[5\alpha\beta(1-\alpha)(1-\beta)+3\beta(1-\beta)^3+2\alpha^2(1-\beta)^2+3\alpha^3(1-\alpha)-
(1-\alpha)^2\beta^2].$$
It is easy to see that $D(\alpha,\beta)=D(1-\beta, 1-\alpha)$. If $\alpha$ is small enough and $\beta$ is large enough then $D>0$ and $B<0$, in this case the equation $(\ref{zz})$ has two positive solutions.
Thus we have proved the following

\begin{thm}\label{tp1}
If $D>0$ and $B<0$ then the model corresponding to the matrix (\ref{ms}) has at least two $G^{(2)}_2$-periodic (non-translation-invariant) Gibbs measures
(see Fig.~4).
\end{thm}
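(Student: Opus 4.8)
The plan is to reduce the problem, exactly as the preparatory material does, to the existence of two distinct positive roots of the quadratic (\ref{zz}), and then to exhibit an explicit region of parameters $(\alpha,\beta)$ where this happens. First I would recall that a $G_2^{(2)}$-periodic (non-translation-invariant) solution of (\ref{et2}) is, by Proposition \ref{ty}, a genuine period-two solution $(z_1,z_2)$ with $z_1\neq z_2$ of the system (\ref{ep}); equivalently $z_1$ and $z_2$ are roots of $g(g(z))=z$ that are not roots of $g(z)=z$, i.e. roots of (\ref{ee}). For $k=2$ the quotient in (\ref{ee}) is, after the stated (long but routine) polynomial division, equivalent to the quadratic $Az^2+Bz+C=0$ of (\ref{zz}) with the coefficients given above.

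Next I would note that a periodic Gibbs measure requires a \emph{positive} solution with $z_1\neq z_2$; so I need (\ref{zz}) to have two distinct positive real roots. The sign analysis is immediate from Vieta: $A=[\alpha(1-\alpha)+(1-\beta)^2]^2>0$ always, and $C=[\alpha^2+\beta(1-\beta)]^2>0$ always, so the product of the roots is positive. Hence the two roots are real and have the same sign precisely when the discriminant is positive, and they are both positive precisely when in addition $-B/A>0$, i.e. $B<0$. This is exactly the hypothesis $D>0$, $B<0$ of the theorem. Therefore under these two conditions (\ref{zz}) has two positive roots $z_1\neq z_2$; and because $A,C>0$ neither of them can coincide with a fixed point of $g$ that would force $z_1=z_2$ — more precisely, swapping the two roles gives the pair $(z_1,z_2)$ and $(z_2,z_1)$, both of which are bona fide solutions of (\ref{ep}), and since $z_1\neq z_2$ this pair is not translation-invariant. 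Each ordered pair, via the prescription $f_x=z_1$ on $G_2^{(2)}$ and $f_x=z_2$ on its complement (and the other way round), defines through (\ref{et1}) (taking $g_x=f_x$, $h_x=1$) a $G_2^{(2)}$-periodic splitting Gibbs measure, and by Proposition \ref{p4} these are genuine Gibbs measures. The two ordered pairs give two distinct such measures.

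Finally I would check that the region $\{D>0,\ B<0\}$ is nonempty, which is what makes the statement non-vacuous: as observed just before the theorem, taking $\alpha$ sufficiently small and $\beta$ sufficiently close to $1$ one has $B<0$ (the negative terms $-(1-\alpha)^2\beta^2$ dominate) and both bracketed factors of $D$ become positive, so $D>0$; the symmetry $D(\alpha,\beta)=D(1-\beta,1-\alpha)$ shows the region also contains a mirror patch near $\alpha\approx1$, $\beta\approx0$. This yields at least two $G_2^{(2)}$-periodic non-translation-invariant Gibbs measures, as claimed, and Fig.~4 depicts the region.

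The main obstacle is not conceptual but bookkeeping: verifying that the rational equation (\ref{ee}) really collapses to the quadratic (\ref{zz}) with precisely those coefficients $A$, $B$, $C$ requires carrying out the polynomial division of $g(g(z))-z$ by $g(z)-z$ for $k=2$ and simplifying; once that identity is in hand, the Vieta sign argument and the nonemptiness check are short.
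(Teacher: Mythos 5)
Your proposal is correct and follows essentially the same route as the paper: reduce the search for $G^{(2)}_2$-periodic solutions to the quadratic (\ref{zz}) via the quotient (\ref{ee}), and conclude from $A,C>0$ together with $D>0$ and $B<0$ that there are two distinct positive roots, giving the period-two pair and its swap. The only difference is that you spell out the Vieta sign argument and the non-emptiness of the parameter region, which the paper states more tersely in the discussion immediately preceding the theorem.
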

\bigskip

\begin{center}
\includegraphics[width=8cm]{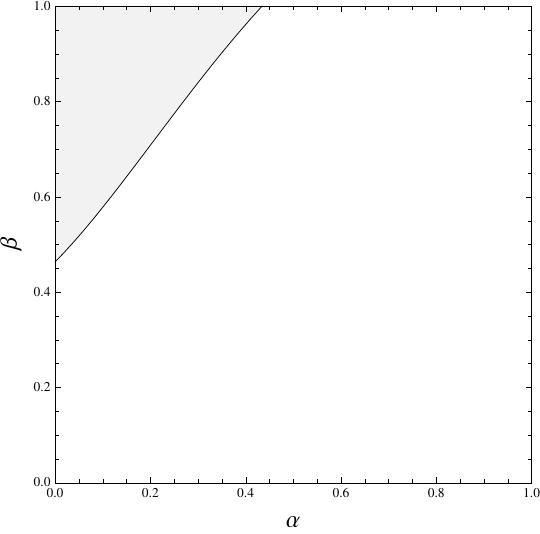}
\end{center}
{\footnotesize {\noindent Fig.~4. Case of periodic solutions of equations (\ref{et2}) for $k=2$. Two periodic solutions exist in the shaded area, no solution in the white area.}}
\bigskip

 In the next picture (Fig.~5), we collect the last three diagrams. One remarks that the transition curves of the cases $f_x = g_x, h_x = 1$ and $f_x =f; g_x =g; h_x =h$ are tangent to one another (this arises for all $k$).

\begin{center}

		\includegraphics[width=8cm]{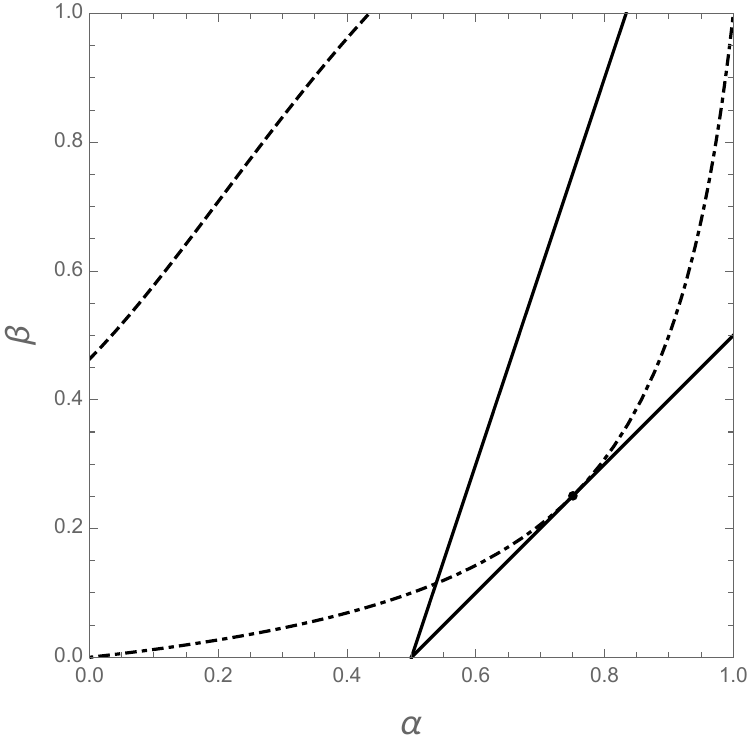}

{\footnotesize {\noindent Fig.~5. The 3 transition diagrams of Figs. 2, 3, and 4.}}

\end{center}
\bigskip

\section{Fertile graphs}

In this section, we consider symmetric graphs,
more precisely three types of fertile graphs,
the so-called stick, gun, and key \cite{Br1}:

\begin{eqnarray}
\mathcal G_{\rm stick}
  &=&
\{(0,1),(0,3),(2,3)\},
\nonumber
\\
\mathcal G_{\rm gun}
  &=&
\{(0,0),(0,1),(0,2),(0,3)(1,2)\},
  \nonumber
\\
\mathcal G_{\rm key}
&=&
\{(0,1),(0,2),(0,3),(1,2)\}.
\nonumber
\end{eqnarray}
 There, the above graphs are undirected, meaning that if $(a,b)$
belongs to the graph,  then it is also the case for  $(b,a)$.

\begin{rk}
The fertile graphs were defined in \cite{Br1}, they are the constraint graphs for which there
exists some set of activity and some Cayley tree for which the associated model exhibits multiple Gibbs measures.
Therefore, they are the interesting graphs to focus on in terms of identifying boundaries between
uniqueness/non-uniqueness regimes.  Brightwell and Winkler proved that a constraint graph is fertile
iff it has one of seven minimal fertile graphs as an induced subgraph.
The three minimal graphs on four vertices are exactly the stick, key and gun.
\end{rk}

\subsection{The  stick  graph}

For this graph, shown in Fig~6,
we have the matrix
 \begin{equation}\label{msa}{\mathbf P}=
\left(
\begin{array}{cccc}
P_{0,0}=0 &P_{0,1}=\alpha &P_{0,2}=0 &P_{0,3}=1-\alpha
\\[2mm]
P_{1,0}=1 &P_{1,1}=0 &P_{1,2}=0 & P_{1,3}=0
\\[2mm]
P_{2,0}=0 &P_{2,1}=0 & P_{2,2}=0&P_{2,3}=1
\\[2mm]
P_{3,0}=\beta&P_{3,1}=0 &P_{3,2}=1-\beta &P_{3,3}=0
\\[2mm]
\end{array}
\right),
\end{equation}
where $\alpha, \beta\in (0,1)$. Consequently,
the system of equations (\ref{et})
reads

\begin{equation}\label{st}\begin{array}{llllllllllll}
\displaystyle
f_x= \prod_{y\in S(x)}{1\over
\alpha f_y+(1-\alpha)h_y},\\[2mm]
\displaystyle
g_x= \prod_{y\in S(x)}{h_y\over
\alpha f_y+(1-\alpha)h_y}, \\[2mm]
\displaystyle
h_x= \prod_{y\in S(x)}{\beta+(1-\beta)g_y\over
\alpha f_y+(1-\alpha)h_y}.
\end{array}
\end{equation}

\begin{center}
\includegraphics[width=8cm]{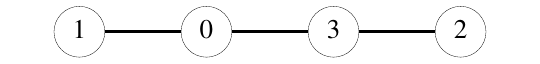}

{\footnotesize
\noindent Fig.~6. The stick  graph}
\end{center}

 Let us exhibit  conditions
on $\alpha$ and $\beta$ under which the system of equations (\ref{st})
has more than one constant solutions, i.e.
$f_x=f, g_x=g, h_x=h$.

We denote $u=f^{1/k}, v=g^{1/k}$ and $w=h^{1/k}$
we get from (\ref{st}) the following
\begin{equation}\label{st1}
u={1\over \alpha u^k+(1-\alpha)w^k}, \ \
v={w^k\over \alpha u^k+(1-\alpha)w^k}, \ \
w={\beta+(1-\beta)v^k\over \alpha u^k+(1-\alpha)w^k}.
\end{equation}

One easily finds that
$$u=\left(v(\beta+(1-\beta)v^k)^{-k}\right)^{1/(k+1)},
\ \ w=\left(v(\beta+(1-\beta)v^k)\right)^{1/(k+1)}.$$
Then from the second equation of (\ref{st1}) we get
\begin{equation}\label{st2}
v=Y(v)={1\over \alpha(\beta+(1-\beta)v^k)^{-k}+(1-\alpha)}.
\end{equation}

It is clear that $Y$ is an increasing,
bounded function and
$$
Y(0)={\beta^k\over \alpha+(1-\alpha)\beta^k}>0, \ \ Y(+\infty)={1\over 1-\alpha}<+\infty.$$
$$Y(1)=1, \ \ Y'(1)=k^2\alpha(1-\beta).$$

\begin{thm}
\label{tst} If $k^2\alpha(1-\beta)>1$ then there are at least three translation-invariant Gibbs measures.
\end{thm}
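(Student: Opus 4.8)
The plan is to follow the same scheme as for the first three–solutions statement of the diamond case (the one with hypothesis $|\eta'(1)|>1$). By the reduction leading to (\ref{st2}), constant solutions $f_x=f$, $g_x=g$, $h_x=h$ of (\ref{st}) are in bijection with the positive fixed points $v$ of the map $Y$, the coordinates $u$ and $w$ being recovered from $v$ via the explicit formulas displayed just before (\ref{st2}). So it suffices to exhibit at least three positive fixed points of $Y$, equivalently three positive zeros of $\Psi(v):=Y(v)-v$.

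First I would record the local behaviour at $v=1$. We are given $Y(1)=1$ and $Y'(1)=k^{2}\alpha(1-\beta)$; thus the hypothesis $k^{2}\alpha(1-\beta)>1$ is exactly $Y'(1)>1$, i.e. $\Psi(1)=0$ and $\Psi'(1)>0$. Hence there is $\varepsilon\in(0,1)$ with $\Psi(v)<0$ for $v\in(1-\varepsilon,1)$ and $\Psi(v)>0$ for $v\in(1,1+\varepsilon)$, so that the fixed point $v=1$ is unstable.

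Then I would apply the intermediate value theorem twice, using the boundary behaviour of $Y$. Since $Y$ is continuous on $[0,+\infty)$ with $Y(0)=\beta^{k}/(\alpha+(1-\alpha)\beta^{k})>0$, we have $\Psi(0)>0$ while $\Psi(1-\varepsilon)<0$, so there is a zero $v^{*}\in(0,1)$. On the other side $\Psi(1+\varepsilon)>0$, whereas $Y$ is bounded (by $1/(1-\alpha)$), so $\Psi(v)\to-\infty$ as $v\to+\infty$; hence there is a zero $v^{**}\in(1,+\infty)$. Together with $v=1$ this yields three distinct positive fixed points of $Y$, hence three distinct constant solutions $(f,g,h)=(u^{k},v^{k},w^{k})$ of (\ref{st}); by Proposition \ref{p4} each determines a translation-invariant splitting Gibbs measure, and the three measures are pairwise distinct because the triples $(f,g,h)$ are.

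I do not anticipate a genuine obstacle: monotonicity and boundedness of $Y$ make the sign analysis of $\Psi$ entirely elementary. The only points needing (routine) care are the verification that $Y'(1)=k^{2}\alpha(1-\beta)$ is indeed what one obtains by differentiating (\ref{st2}) at $v=1$, and the remark that distinct fixed points of $Y$ give distinct Gibbs measures. A sharper conclusion (precisely three solutions, with $v^{*},v^{**}$ stable) would require in addition a study of $Y'$ and of the shape of $Y$ on all of $(0,+\infty)$, but that is not needed for the stated lower bound.
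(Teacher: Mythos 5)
Your argument is essentially identical to the paper's proof: both use $Y(1)=1$ together with $Y'(1)=k^{2}\alpha(1-\beta)>1$ to show the fixed point $v=1$ is unstable (so $Y(v)<v$ just below $1$ and $Y(v)>v$ just above $1$), and then invoke $Y(0)>0$ and the boundedness of $Y$ at $+\infty$ to produce, via the intermediate value theorem, additional fixed points $v^{*}\in(0,1)$ and $v^{**}\in(1,+\infty)$. Your added remarks on recovering $(u,w)$ from $v$ and on distinct fixed points yielding distinct measures are correct and only make explicit what the paper leaves implicit.
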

\begin{proof} By properties of $Y(v)$ we know that $v=1$ is a solution of (\ref{st2}).
Under $|Y'(1)|>1$, $v=1$ is unstable. So there is sufficiently small neighborhood of $v=1$: $(1-\varepsilon, 1+\varepsilon)$ such that $Y(v)<v$, for $v\in (1-\varepsilon, 1)$ and $Y(v)>v$, for $v\in (1, 1+\varepsilon)$. Since $Y(0)>0$ there is a solution $v^*$ between 0 and 1, similarly since $Y(+\infty)<+\infty$ there is an other solution $v^{**}$ between 1 and $+\infty$. Thus there are at least three solutions. This completes the proof.
\end{proof}

\subsection{The  gun graph}

For this graph (see Fig~7) one has
 \begin{equation}\label{msl}{\mathbf P}=
\left(
\begin{array}{cccc}
P_{0,0}=a &P_{0,1}=b &P_{0,2}=c &P_{0,3}=d
\\[2mm]
P_{1,0}=\alpha &P_{1,1}=0 &P_{1,2}=1-\alpha & P_{1,3}=0
\\[2mm]
P_{2,0}=\beta &P_{2,1}=1-\beta & P_{2,2}=0&P_{2,3}=0
\\[2mm]
P_{3,0}=1&P_{3,1}=0 &P_{3,2}=0 &P_{3,3}=0
\\[2mm]
\end{array}
\right),
\end{equation}
where $\alpha, \beta, a, b, c, d\in (0,1); a+b+c+d=1.$ Consequently,
the system of recursive equations (\ref{et})
is given by

\begin{equation}\label{gun}\begin{array}{llllllllllll}
\displaystyle
f_x= \prod_{y\in S(x)}{\alpha+(1-\alpha)g_y\over
a f_y+b g_y + c h_y+d},\\[2mm]
\displaystyle
g_x= \prod_{y\in S(x)}{\beta+(1-\beta)f_y\over
a f_y+b g_y + c h_y+d}, \\[2mm]
\displaystyle
h_x= \prod_{y\in S(x)}{1\over
a f_y+b g_y + c h_y+d}.
\end{array}
\end{equation}

\begin{center}
\includegraphics[width=7cm]{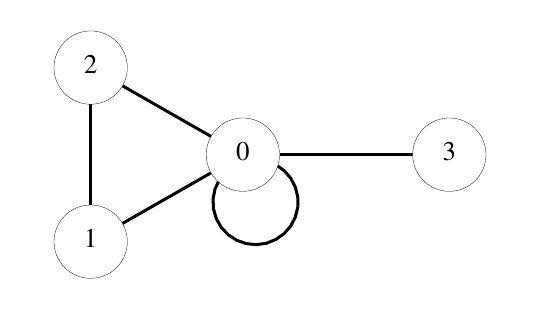}

{\footnotesize
\noindent Fig.~7. The gun  graph}
\end{center}

In this case for simplicity we assume $\alpha=\beta$.
Then for constant solutions, denoting $u=(f_x)^{1/k}, \,  v=(g_x)^{1/k}$ and $w=(h_x)^{1/k}$
we get form (\ref{gun}) that
\begin{equation}\label{gun1}
u={\alpha+(1-\alpha)v^k\over a u^k+b v^k+ c w^k+d}, \ \
v={\alpha+(1-\alpha)u^k\over a u^k+b v^k+ c w^k+d}, \ \
w={1\over a u^k+b v^k+ c w^k+d}.
\end{equation}
From this system we get
$$u=w(\alpha+(1-\alpha)v^k), \ \ v=w(\alpha+(1-\alpha)u^k).$$ Consequently
$$u(\alpha+(1-\alpha)u^k)=v(\alpha+(1-\alpha)v^k).$$
This gives $u=v$ and then $w=u(\alpha+(1-\alpha)u^k)^{-1}$.  Hence we have
\begin{equation}\label{gun2}
u=U(u)={\left(\alpha+(1-\alpha)u^k\right)^{k+1}\over \left[(a+b)\left(\alpha+(1-\alpha)u^k\right)^k+c\right]u^k+d\left(\alpha+(1-\alpha)u^k\right)^k}.
\end{equation}
The following properties of $U(u)$ are clear: $U$ is bounded and
$$U(0)={\alpha\over d}, \, U(+\infty)<+\infty, \, U(1)=1, \, U'(1)=k\{kc+d-\alpha(kc+1)\}.$$
Using these properties one can prove the following
\begin{thm}\label{tgun}  If $k(kc+d-\alpha(kc+1))>1$, there exists at least three translation-invariant Gibbs measures.
\end{thm}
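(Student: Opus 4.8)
The plan is to mimic, almost verbatim, the proof of the analogous statement for the stick graph (Theorem~\ref{tst}), since the reduction to a single scalar fixed--point equation $u=U(u)$ has already been carried out and the function $U$ has exactly the same qualitative shape as $Y$. First I would record the four properties of $U$ listed just before the theorem: $U$ is bounded on $(0,+\infty)$, $U(0)=\alpha/d>0$, $U(+\infty)<+\infty$, $U(1)=1$, and $U'(1)=k\{kc+d-\alpha(kc+1)\}$. Boundedness follows because numerator and denominator are polynomials in $u^k$ of the same degree $k(k+1)$ with strictly positive leading and constant coefficients (using $a,b,c,d,\alpha\in(0,1)$ and $a+b+c+d=1$); $U(0)$ and $U(+\infty)$ are then immediate, and $U(1)=1$ is a direct substitution. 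The derivative value at $u=1$ is the one genuinely computational point — I would differentiate $\log U(u)=(k+1)\log(\alpha+(1-\alpha)u^k)-\log\big([(a+b)(\alpha+(1-\alpha)u^k)^k+c]u^k+d(\alpha+(1-\alpha)u^k)^k\big)$, evaluate at $u=1$ where $\alpha+(1-\alpha)u^k=1$ and the denominator equals $a+b+c+d=1$, and collect terms to land on $U'(1)=k\{kc+d-\alpha(kc+1)\}$.

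Next, assuming $k|kc+d-\alpha(kc+1)|>1$, i.e. $|U'(1)|>1$, I would argue that the fixed point $u=1$ is repelling. Concretely, since $U$ is continuously differentiable near $1$ with $|U'(1)|>1$, there is a small $\varepsilon>0$ with $U(u)<u$ on $(1-\varepsilon,1)$ and $U(u)>u$ on $(1,1+\varepsilon)$ (the sign of $U'(1)$ does not matter: if $U'(1)<-1$ one passes to $U\circ U$, whose derivative at $1$ is $U'(1)^2>1$, or one simply notes that the graph of $U$ crosses the diagonal transversally from above to below as $u$ increases through $1$ — in either case the strict inequalities on the two one--sided neighborhoods hold). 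Then $g(u):=U(u)-u$ satisfies $g(1-\varepsilon)<0$. Since $U(0)=\alpha/d>0$ while $0<1$ gives $g(0)>0$, the intermediate value theorem produces a root $u^*\in(0,1-\varepsilon)\subset(0,1)$. Symmetrically, $g(1+\varepsilon)>0$, and since $U$ is bounded above there is some large $M$ with $U(M)<M$, hence $g(M)<0$ and a root $u^{**}\in(1+\varepsilon,M)\subset(1,+\infty)$.

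Finally I would conclude: the equation $u=U(u)$ has at least the three distinct positive solutions $u^*<1<u^{**}$. Each such solution yields, through $v=u$, $w=u(\alpha+(1-\alpha)u^k)^{-1}$, and then $f=u^k$, $g=v^k$, $h=w^k$, a constant solution of the recursion system~(\ref{gun}); by Proposition~\ref{p4} each constant solution determines a translation--invariant splitting Gibbs measure, and distinct values of $u$ give distinct measures. Hence there exist at least three translation--invariant Gibbs measures, as claimed.

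The only real obstacle is the bookkeeping in the formula for $U'(1)$ — it is a routine but error--prone logarithmic derivative; everything else is a carbon copy of the stick--graph argument, with the mild technical caveat (already glossed over in Theorem~\ref{tst}'s proof) that when $U'(1)<-1$ one should either work with the second iterate or directly check the two one--sided inequalities, rather than appeal literally to "$U$ increasing near $1$."
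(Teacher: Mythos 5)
Your argument is the same as the paper's (the paper literally says ``the proof is similar to the proof of Theorem~\ref{tst}''), and in the case $kc+d-\alpha(kc+1)>0$, i.e.\ $U'(1)>1$, it is complete and correct: $U(u)<u$ just left of $1$, $U(u)>u$ just right of $1$, and the intermediate value theorem together with $U(0)=\alpha/d>0$ and the boundedness of $U$ produces fixed points $u^*\in(0,1)$ and $u^{**}\in(1,+\infty)$.

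The problem is the case $U'(1)<-1$, which the theorem's absolute value explicitly includes and which you flag but do not actually repair. Your parenthetical claim that ``in either case the strict inequalities on the two one--sided neighborhoods hold'' is false when $U'(1)<-1$: there $g(u)=U(u)-u$ satisfies $g'(1)=U'(1)-1<0$, so $g>0$ on $(1-\varepsilon,1)$ and $g<0$ on $(1,1+\varepsilon)$ --- the \emph{opposite} orientation --- and since also $g(0)>0$ and $g(M)<0$ for large $M$, the intermediate value theorem yields no additional sign change on either side; it is entirely consistent with $u=1$ being the unique fixed point of $U$. Your fallback of passing to $U\circ U$ does give $(U\circ U)'(1)=U'(1)^2>1$ and hence extra fixed points of the second iterate, but those are in general period--two points of $U$, not fixed points; they correspond to $G_k^{(2)}$--periodic solutions of the recursion, i.e.\ to periodic rather than translation--invariant Gibbs measures, so they do not prove the stated conclusion. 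This gap is inherited from the paper itself (for the stick graph $Y'(1)=k^2\alpha(1-\beta)>0$ always, so the issue never arises there, but for the gun graph $U'(1)$ can be negative); an honest proof either restricts to $k\{kc+d-\alpha(kc+1)\}>1$ or must supply a separate argument (or a weaker conclusion about periodic measures) for the negative branch.
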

\begin{proof} The proof is similar to the proof of Theorem \ref{tst}.
\end{proof}

\subsection{The  key graph}

The system of recursive equations for this  final fertile  graph under consideration (see Fig~8),  is as follows:

\begin{equation}\label{key}\begin{array}{llllllllllll}
\displaystyle
f_x= \prod_{y\in S(x)}{\alpha+(1-\alpha)g_y\over
a f_y+b g_y + c h_y},\\[2mm]
\displaystyle
g_x= \prod_{y\in S(x)}{\beta+(1-\beta)f_y\over
a f_y+b g_y + c h_y}, \\[2mm]
\displaystyle
h_x= \prod_{y\in S(x)}{1\over
a f_y+b g_y + c h_y},
\end{array}
\end{equation}
where $\alpha, \beta, a, b, c\in (0,1); a+b+c=1,$ which are defined as in (\ref{msl}) with $d=0$.

\begin{center}
\includegraphics[width=7cm]{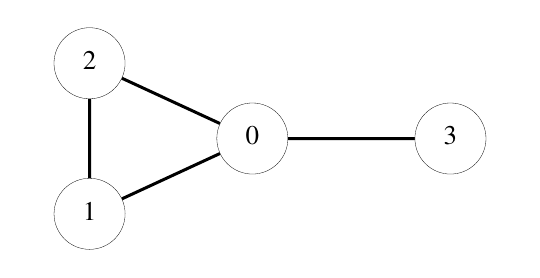}

{\footnotesize
\noindent Fig.~8. The key  graph}
\end{center}

This is  a particular case of the previously analyzed gun graph (obtained with $d=0$).
Hence Theorem \ref{tgun} remains true
with $\alpha=\beta$ and $k(kc-\alpha(kc+1))>1$.

\section*{ Acknowledgements}

 U.Rozikov thanks CNRS for support and  the Centre de Physique Th\'eorique De Marseille, France for kind hospitality
 during his several visits. We thank the referee for useful comments.

\end{document}